\documentclass[doublespace, onecolumn,12pt,technote]{IEEEtran}

\usepackage[dvips]{epsfig}
\usepackage[dvips]{graphicx}
\usepackage{color}
\usepackage{amsfonts}
\usepackage{amssymb}
\usepackage{theorem}
\usepackage{amsmath}
\usepackage{graphics,amsmath,amsfonts,amssymb,epsfig,latexsym,amsfonts,graphicx,amssymb}
\usepackage{setspace}
\usepackage{cite}

\setstretch{1.2}

\normalsize

\newcommand{\trace}{{\mbox{\textrm{Tr}}}}

\newcommand{\bfepsilon}{{\mbox{\boldmath $\epsilon$}}}

\newcommand{\bfmu}{{\mbox{\boldmath $\mu$}}}

\newcommand{\st}{{\rm s.t.}}
\newcommand{\by}{\mathbf{y}}
\newcommand{\bn}{\mathbf{n}}
\newcommand{\bx}{\mathbf{x}}
\newcommand{\ba}{\mathbf{a}}
\newcommand{\bb}{\mathbf{b}}
\newcommand{\bs}{\mathbf{s}}

\newcommand{\bJ}{\mathbf{J}}
\newcommand{\bE}{\mathbf{E}}

\newcommand{\bH}{\mathbf{H}}
\newcommand{\bI}{\mathbf{I}}
\newcommand{\bU}{\mathbf{U}}

\newcommand{\bQ}{\mathbf{Q}}

\newcommand{\bg}{\mathbf{g}}
\newcommand{\bW}{\mathbf{W}}
\newcommand{\bV}{\mathbf{V}}
\newcommand{\cI}{\mathcal{I}}
\newcommand{\cK}{\mathcal{K}}
\newcommand{ \firstQopt}{ \left[
\begin{array}{cc}
1 & 0\\
0 & 0\\
\end{array}
\right]
}
\newcommand{\secondQopt}{ \left[
\begin{array}{cc}
0 & 0\\
0 & 1\\
\end{array}
\right]
}
\newcommand{\thirdQopt}{ \left[
\begin{array}{cc}
0.5 & 0.5j\\
-0.5j & 0.5\\
\end{array}
\right]
}
\newcommand{\fourthQopt}{ \left[
\begin{array}{cc}
0.5 & -0.5j\\
0.5j & 0.5\\
\end{array}
\right]
}
\newcommand{\tr}{{\rm Tr}}

\newcommand{\bd}{\mathbf{d}}

\newcommand{\bc}{\mathbf{c}}

\def\BibTeX{{\rm B\kern-.05em{\sc i\kern-.025em b}\kern-.08em
    T\kern-.1667em\lower.7ex\hbox{E}\kern-.125emX}}

\setlength{\textwidth}{7.1in} \setlength{\textheight}{8.598in}
\setlength{\evensidemargin}{-1in}
\setlength{\oddsidemargin}{-0.25in} \setlength{\topmargin}{-0.08in}

\title{Linear Transceiver Design for a MIMO Interfering Broadcast Channel Achieving Max-Min Fairness}
\author{Meisam Razaviyayn, Mingyi Hong, and Zhi-Quan Luo
\thanks{M. \ Razaviyayn, M.\ Hong,
and Z.-Q.\ Luo are with the Department of Electrical and Computer
Engineering University of Minnesota, Minneapolis, MN 55455,
USA}\thanks{This research is supported in part by a research gift
from Huawei Technologies Inc, and by the Army Research Office, grant
number W911NF-09-1-0279.}}
\newtheorem{lemma}{Lemma}
\newtheorem{prop}{Proposition}
\newtheorem{thm}{Theorem}

\newtheorem{coro}{Corollary}

\theorembodyfont{\rmfamily}

\begin{document}
\maketitle
\begin{abstract}
We consider the problem of linear transceiver design to achieve
max-min fairness in a downlink MIMO multicell network. This problem
can be formulated as maximizing the minimum rate among all the users
in an interfering broadcast channel (IBC). In this paper we show
that when the number of antennas is at least two at each
of the transmitters and the receivers, the min rate maximization
problem is NP-hard in the number of users. Moreover, we develop a
low-complexity algorithm for this problem by iteratively solving a sequence of convex
subproblems, and establish its global convergence to
a stationary point of the original minimum rate maximization
problem. Numerical simulations show that this algorithm
is efficient in achieving fairness among all the users.
\end{abstract}

\vspace{-0.5cm}
\section{Introduction}
\label{sec:intro}

We consider the linear transceiver design problem in a MIMO-IBC, in
which a set of Base Stations (BSs) send data to their intended
users. Both the BSs and the users are equipped with multiple
antennas, and they share the same time/frequency resource for
transmission. The objective is to maximize the minimum rate among
all the users in the network, in order to achieve network-wide
fairness.

Providing max-min fairness has long been considered as an important
design criterion for wireless networks. Hence various algorithms that
optimize the min-rate utility in different network settings have
been proposed in the literature. References \cite{zander92a,
zander92b} are early works that studied the max-min signal to
interference plus noise ratio (SINR) power control problem and a
related SINR feasibility problem in a scalar interference channel
(IC). It was shown in \cite{zander92a,
zander92b} that for randomly generated scalar ICs,
with probability one there exists a unique optimal solution to the
max-min problem. The proposed algorithm with an additional binary
search can be used to solve the max-min fairness problem
efficiently. Recently reference \cite{tan09infocom} derived a set of
algorithms based on nonlinear Perron-Frobenius theory for the same
network setting. Differently from \cite{zander92a, zander92b}, the
proposed algorithms can also deal with individual users' power
constraints.

Apart from the scalar IC case, there have been many published 
results \cite{Yang98, bengtsson01, Wiesel06, liu11MISO, Boche06,Cai11,Schubert04} 
on the min rate maximization problem in a
multiple input single output (MISO) network, in which the BSs are
equipped with multiple antennas and the users are only equipped with
a single antenna. Reference \cite{Yang98} utilized the nonnegnative
matrix theory to study the related power control problem when the
beamformers are known and fixed. When optimizing the transmit power
and the beamformers jointly, the corresponding min-rate utility
maximization problem is non-convex. Despite the lack of convexity,
the authors of \cite{bengtsson01} showed that a semidefinite
relaxation is tight for this problem, and the optimal solution can
be constructed from the solution to a reformulated semidefinite
program. Furthermore, the authors of \cite{Wiesel06} showed that
this max-min problem can be solved by a sequence of second order
cone programs (SOCP). Reference \cite{Schubert04} identified an
interesting uplink downlink duality property, in which the downlink
min-rate maximization problem can be solved by alternating between a
downlink power update and a uplink receiver update.  In a related
work \cite{Boche06}, the authors made an interesting observation
that in a single cell MISO network, the global optimum of this
problem can be obtained by solving a (simpler) weighted sum inverse
SINR problem with a set of appropriately chosen weights. However,
this observation is only true when the receiver noise is negligible.
The authors of \cite{Cai11} extended their early results
\cite{tan09infocom} to the MISO setting with a single BS and
multiple users. A fixed-point algorithm that alternates between
power update and beamformer updates was proposed, and the nonlinear
Perron-Frobenius theory was applied to prove the convergence of the
algorithm.

Unlike the MISO case, the existing work on the max-min problem for 
MIMO networks is rather limited; see \cite{Cai11} and \cite{liu11ICC}. 
Both of these studies consider a MIMO network in which a {\it single stream} is
transmitted for each user. 
In particular, the author of \cite{liu11ICC} showed that finding the
global optimal solution for this problem is intractable (NP-hard)
when the number of antennas at each transmitter/receiver is at least
\emph{three}. They then proposed an efficient algorithm that alternates
between updating the transmit and the receive beamformers to find a
local optimal solution. The key observation is that when the users'
receive beamformers are fixed, finding the set of optimal transmit
beamformers can be again reduced to a sequence of SOCP and solved
efficiently. For more discussion of the max-min and its related
resource allocation  problems in interfering wireless networks, we
refer the readers to a recent survey \cite{hong12survey}.

In this paper, we consider a MIMO interfering broadcast network whereby there
are {\it multiple} users associated with each BS, and all the users
and the BSs are equipped with multiple antennas. Such a setting is more general
than those studied in all the works cited above. Moreover, we do not
restrict the number of transmitted data streams for each user. Recent
works that deal with linear transceiver design in this type of
network include \cite{christensen08,WMMSETSP}. However, these works
aim at optimizing differentiable system utilities such as the
weighted sum rate (WSR) utility that excludes the min-rate
utility considered in this work. To the best of our knowledge, there
is no known algorithm that can effectively compute a high quality
solution for the max-min problem in the general context considered
in this work.

The main contributions of this paper are summarized as follows. First,
we show that in the considered general setting, when there are at
least {\it two} antennas at each transmitters and the receivers, the
min-rate maximization problem is NP-hard in the number of users.
This result is a generalization of that presented in
\cite{liu11ICC}, in which the NP-hardness results require more than
{\it three} antennas at the users and BSs. We further provide a
reformulation of the original max-min problem by generalizing the
framework developed in \cite{WMMSETSP}, and design an algorithm that
computes an approximate solution to the max-min problem. The
proposed algorithm has the following desirable features: {\it i)} it
is computationally efficient, as in each step a convex optimization
problem whose solution can be obtained easily in a closed form; {\it ii)} it is guaranteed to converge to a
stationary solution of the original problem.

The rest of the paper is organized as follows. In Section
\ref{sec:System_Model}, we provide the system model. In Section
\ref{sec:NP_Hardness}, we give detailed analysis of the complexity
of the considered problem. In Section 
\ref{sec:reformulation}--\ref{sec:Algorithm}, we reformulate the problem into an equivalent
form, and propose an algorithm that iteratively optimizes the
transformed problem. In Section \ref{sec:Simulations}, numerical
experiments are provided.

{\it Notations}: For a symmetric matrix $\mathbf{X}$,
$\mathbf{X}\succeq 0$ signifies that $\mathbf{X}$ is positive
semi-definite. We use $\trace(\mathbf{X})$, $|\mathbf{X}|$ and
$\mathbf{X}^H$  to denote the trace, determinant and hermitian of a
matrix, respectively. 
We use $\mathbb{R}^{N\times M}$ and $\mathbb{C}^{N\times M}$ to
denote the set of real and complex $N\times M$ matrices. We use the
notation $j$ to denote the imaginary unit, with $j^2=-1$. We use the
expression: $0\le a \perp b\ge 0$ to indicate $a\ge0, b\ge 0,
a\times b=0$.

\section{System Model and Problem Formulation}
\label{sec:System_Model} 

Consider a cellular network with $K$ cells.
Let us assume that the cell assignment of the users has been done
and each BS is interested in serving the users in its own cell. More
precisely, we assume each BS~$k$, $k=1,2,\ldots, K$, is equipped
with $M_k$ transmit antennas and serves~$I_k$ number of users in
cell~$k$. Let us use the notation~$i_k$ to denote the $i$-th user in
cell~$k$ and $N_{i_k}$ to denote the number of receive antennas of
user~$i_k$. Also define~$\cI$ and $\cI_k$ to be the set of all users
and the set of users in cell~$k$, respectively:
\begin{align}
\cI &= \left\{i_k\mid 1\leq k\leq K, 1 \leq i\leq I_k \right\}, \quad
\cI_k = \left\{i_k\mid 1 \leq i\leq I_k \right\} \nonumber.
\end{align}
Let $\cK$ to be the set of all BSs $\cK =
\left\{1,2,\ldots,K\right\}.$ Throughout, we use $i,m$ to denote the
index for the users, and use $k,\ell$ to denote the index for the
BSs.

For the standard linear channel model, the received signal of
user~$i_k$ can be written as{\small
\begin{align}
\by_{i_k} &= \underbrace{\bH_{i_kk}\bx_{i_k}}_{\textrm{desired
signal}}+\underbrace{\sum_{m \neq i,
m=1}^{I_k}\bH_{i_kk}\bx_{m_k}}_{\textrm{intracell
interference}}+\underbrace{\sum_{\ell\neq k, \ell=1}^K \sum_{m =
1}^{I_j} \bH_{i_k\ell}\bx_{m_\ell}+\bn_{i_k}}_{\textrm{intercell
interference plus noise}},  \nonumber
\end{align}}
\hspace{-0.2cm}where $\bx_{i_k}\in \mathbb{C}^{M_k\times 1}$ and
$\by_{i_k}\in \mathbb{C}^{N_{i_k}\times1}$ are respectively the
transmitted and received signal of user~$i_k$. The matrix
$\bH_{i_k j}\in\mathbb{C}^{N_{i_k} \times M_j}$ represents the
channel response from transmitter~$j$ to receiver~$i_k$, while
$\bn_{i_k} \in \mathbb{C}^{N_{i_k}\times 1}$ denotes the complex
additive white Gaussian noise with distribution
$\mathcal{CN}(0,\sigma_{i_k}^2 \bI)$ at receiver~$i_k$.

For practical considerations, we focus on optimal {\em linear}
transmit and receive strategies that can maximize a system
utility. Specifically, let BS~$k$ use a beamforming matrix
$\mathbf{V}_{i_k}$ to send the signal vector
$\mathbf{s}_{i_k}$ to receiver~$i_k$, and suppose
receiver~$i_k$ estimates the transmitted data
vector~$\mathbf{s}_{i_k}$ by using a linear beamforming matrix
$\mathbf{U}_{i_k}$, i.e.,
\begin{align}
\mathbf{x}_{i_k} = \mathbf{V}_{i_k} \; \mathbf{s}_{i_k}, \quad
\hat{\mathbf{s}}_{i_k} = \mathbf{U}_{i_k}^H \mathbf{y}_{i_k}, \quad \forall\; i_k \in \cI , \nonumber
\end{align}
where the data vector $\mathbf{s}_{i_k} \in \mathbb{C}^{d_{i_k}
\times 1}$ is normalized so that $\mathbb{E}[\mathbf{s}_{i_k}
\mathbf{s}_{i_k}^H] = \bI$, and $\hat{\mathbf{s}}_{i_k}$ is the
estimate of $\mathbf{s}_{i_k}$ at $i$-th receiver in cell~$k$.
$\mathbf{V}_{i_k} \in \mathbb{C}^{M_k \times d_{i_k}}$ and
$\mathbf{U}_{i_k} \in \mathbb{C}^{N_{i_k} \times d_{i_k}}$ are
respectively the transmit and receive beamforming matrices used for
serving the $i$-th user in cell~$k$. Let
$\mathbf{Q}_{i_k}\in\mathbb{R}^{M_k\times M_k}, \;\bQ_{i_k}\triangleq \mathbf{V}_{i_k}
\mathbf{V}^H_{i_k}$, denote the transmit covariance matrix for user
$i_k$. Let $\bV\triangleq\{\bV_{i_k}\}_{i_k\in\mathcal{I}}$ and
$\bQ\triangleq\{\bQ_{i_k}\}_{i_k\in\mathcal{I}}$.

The mean squared error (MSE) matrix for user $i_k$ can be written
as{\small
\begin{equation}\label{EQ:MSE}
\begin{split}
\bE_{i_k}&\triangleq \mathbb{E}_{\bs, \bn}\left[(\hat{\bs}_{i_k} - \bs_{i_k})(\hat{\bs}_{i_k}-\bs_{i_k})^H \right]\\
&=(\bI-\bU_{i_k}^H \bH_{i_kk} \bV_{i_k})(\bI- \bU_{i_k}^H \bH_{i_kk}\bV_{i_k})^H\\
&+ \sum_{m_\ell \neq i_k} \bU_{i_k}^H \bH_{i_k\ell}
\bV_{m_\ell}\bV_{m_\ell}^H \bH_{i_k\ell}^H
\bU_{i_k}+\sigma_{i_k}^2\bU_{i_k}^H \bU_{i_k}.
\end{split}
\end{equation}}
\vspace{-0.2cm}Treating interference as noise, the rate of the
$i$-th user in cell~$k$ is given by{\small
\begin{align}
R_{i_k} &= \log \det \bigg(\bI +  \bH_{i_kk}\bV_{i_k} \bV_{i_k}^H
\bH_{i_kk}^H  \bigg( \sigma_{i_k}^2 \bI+ \sum_{m_\ell \neq i_k}
\bH_{i_k\ell}\bV_{m_\ell} \bV_{m_\ell}^H \bH_{i_k\ell}^H
\bigg)^{-1}
\bigg) \label{EQ:rate}\\
&= \log \det \bigg(\bI +  \bH_{i_kk}\bQ_{i_k} \bH_{i_kk}^H \bigg(
\sigma_{i_k}^2 \bI+ \sum_{m_\ell \neq i_k}
\bH_{i_k\ell}\bQ_{m_\ell}
 \bH_{i_k\ell}^H \bigg)^{-1}
\bigg).\label{EQ:rateCovariance}
\end{align}}
\hspace{-0.2cm}We will occasionally use the notations $R_{i_k}(\bV)$
(resp. $R_{i_k}(\bQ)$) to make their dependencies on $\bV$ (resp.
$\bQ$) explicit.

The problem of interest is to find the transmit beamformers
$\bV = \{\bV_{i_k}\}_{i_k\in \cI}$ such that a utility of the
system is maximized, while each BS $k$'s power budget of the form
$\sum_{i=1}^{I_k} \tr (\bV_{i_k}\bV_{i_k}^H)\leq P_k$ is satisfied.
Note that $P_k$ denotes the power budget of transmitter~$k$.
In this work, our focus is on the max-min utility function, i.e., we are
interested in solving the following problem{\small
\begin{equation}\label{EQ:MaxMin_Original}\tag{P}
\begin{split}
\max_{\{\bV_{i_k}\}_{i_k \in \cI}} \quad &\min_{i_k \in \cI} \quad R_{i_k}(\bV)\\
\st \quad & \sum_{i=1}^{I_k} \tr(\bV_{i_k}\bV_{i_k}^H)\leq P_k, \quad \forall\; k \in \cK.
\end{split}
\end{equation}}
\hspace{-0.2cm}Similar to~\cite{Wiesel06}, one can
solve~\eqref{EQ:MaxMin_Original} by solving a series  of problems of
the following type for different values of $\gamma$:{\small
\begin{equation}\label{EQ:MinPower}
\begin{split}
\min_{\{\bV_{i_k}\}_{i_k \in \cI}} \quad &\sum_{k=1}^K \sum_{i=1}^{I_k} \tr(\bV_{i_k} \bV_{i_k}^H)\\
\st \quad &R_{i_k}(\bV) \geq \gamma, \quad \forall\; i_k \in \cI\\
& \sum_{i=1}^{I_k} \tr(\bV_{i_k}\bV_{i_k}^H)\leq P_k, \quad \forall\; k \in \cK.
\end{split}
\end{equation}}
\hspace{-0.2cm}The above problem is to minimize the total power
consumption in the network subject to quality of service (QoS)
constraints. In what follows, we first study the complexity status
of problem~\eqref{EQ:MaxMin_Original} and \eqref{EQ:MinPower}. Then,
we propose an efficient algorithm for designing the beamformers
based on the maximization of the worst user performance in the
system.

\section{NP-Hardness of Optimal Beamformer Design}
\label{sec:NP_Hardness} 

In this section, we
analyze the complexity status of problem~\eqref{EQ:MaxMin_Original}
and \eqref{EQ:MinPower}. In the single input single output (SISO)
case where $M_{k} = N_{i_k} = 1,\forall\; k\in \cK, \forall\; i_k \in
\cI$, it has been shown that problem~\eqref{EQ:MaxMin_Original} and
problem~\eqref{EQ:MinPower} can be solved in polynomial time,
see \cite{luo08a} and the references therein. Furthermore, it is shown that in the multiple input
single output (MISO) case where $M_{k} > N_{i_k} =1, \forall\; k\in
\cK, \forall\; i_k \in \cI$, both problems  are still polynomial time
solvable \cite{bengtsson01, Rashid98}. In this section, we consider
the MIMO case where $M_{k} \ge 2$, and $N_{i_k} \geq 2$. We show
that unlike the above mentioned special cases, both
problems~\eqref{EQ:MaxMin_Original} and \eqref{EQ:MinPower} are
NP-hard.

In fact, it is sufficient to show that for a {\it simpler} MIMO IC
network with $K$ transceiver pairs and with each node equipped with
at least two antennas, solving the max-min problem (P) and the min-power
problem~\eqref{EQ:MinPower} are both NP-hard. For convenience, we rewrite the max-min
beamformer design problem in this $K$ user MIMO IC as an
equivalent\footnote{The equivalence is in the sense that for every optimal solution $\{\bV^*\}$ of (P) with $M_k = d_k$, there exists $\lambda^* \geq 0$ so that by defining $\bQ_{k}^* = \bV_{k}^* \bV_{k}^{*H}, \forall\; k$, the point $\{\lambda^*,\bQ^*\}$ is an optimal solution of~\eqref{problemCovariance}. Conversely, if $\{\lambda^*,\bQ^*\}$ is an optimal solution of \eqref{problemCovariance} and $\bQ_{k}^* = \bV_{k}^* \bV_{k}^{*H}, \forall\; k$, then $\bV^*$ is an optimal solution of (P).} covariance maximization form
\begin{equation}
\begin{split}
\max_{(\lambda,\bQ)} \quad &\lambda\label{problemCovariance}\\
\st \quad & \lambda \leq R_k(\bQ),~\tr(\bQ_k) \leq 1, \bQ_k\succeq
0,\quad \forall~ k=1,\cdots,K.
\end{split}
\end{equation}
where $R_k(\bQ) = \log \det \left(\bI + \bH_{kk} \bQ_k  \bH_{kk}^H (\sigma^2_k \bI + \sum_{j\neq k} \bH_{kj} \bQ_j \bH_{k j}^H)^{-1}\right)$. Note that $\lambda$ is the slack variable that is introduced to
represent the objective value of the problem. The first step towards
proving the desired complexity result is to recognize certain
special structures in the optimal solutions of the problem
\eqref{problemCovariance}.

Let us consider a $3$-user MIMO IC with two antennas at each node.
Suppose $\sigma^2_k = 1$ for all $k$ and the channels are given as{\small
\begin{align}
\bH_{ii}= \left[
\begin{array}{cc}
1 & 0\\
0 & 1\\
\end{array}
\right],\quad \forall\; i =1,2,3 \quad {\rm and} \quad \bH_{im}=
\left[
\begin{array}{cc}
0 & 2\\
2 & 0\\
\end{array}
\right], \quad \forall\; i\neq m, \quad i,m
=1,2,3.\label{EQ:Channel3User}
\end{align}}
Our first result characterizes the global optimal solutions for
problem \eqref{problemCovariance} in this special network.

\begin{lemma} \label{Lemma3User}
Suppose $K=3$ and the channels are given as \eqref{EQ:Channel3User}.
Let $\mathcal{S} = \{(\lambda^*, \bQ_1^* , \bQ_2^*,\bQ_3^*)\}$ denote the set of optimal solutions of the problem
\eqref{problemCovariance}. Then $\mathcal{S}$ can be expressed
as{\small
\begin{align}
\mathcal{S}=\left\{(1, \bQ_a^*,\bQ_a^*,\bQ_a^*),(1,
\bQ_b^*,\bQ_b^*,\bQ_b^*),(1, \bQ_c^*,\bQ_c^*,\bQ_c^*),(1,
\bQ_d^*,\bQ_d^*,\bQ_d^*)\right\}, \label{EQ:temp0lemma}
\end{align}}
where {\small$\bQ_a^* = \firstQopt$, $\bQ_b^* = \secondQopt$,
$\bQ_c^* = \thirdQopt$, and $\bQ_d^* = \fourthQopt$}.
\end{lemma}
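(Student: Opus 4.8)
\emph{Proof plan.}
I would first nail down the optimal value, $\lambda^*=1$, by separately proving $\lambda^*\ge 1$ and $\lambda^*\le 1$, and only then characterize the maximizers. Throughout set $\bJ\triangleq\tfrac12\bH_{im}$ (any $i\ne m$), so that $\bJ=\bJ^H$, $\bJ^2=\bI$, and $\bH_{im}\bX\bH_{im}^H=4\,\bJ\bX\bJ$ for every $\bX$; also recall $\bH_{kk}=\bI$ and $\sigma_k^2=1$, so the interference-plus-noise covariance of user $k$ is $\bR_k=\bI+\sum_{j\ne k}\bH_{kj}\bQ_j\bH_{kj}^H\succeq\bI$ and $R_k(\bQ)=\log\det(\bI+\bQ_k\bR_k^{-1})$.

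\emph{Feasibility ($\lambda^*\ge 1$).}
Each of the four listed matrices is of the form $\bv\bv^H$ for a unit vector $\bv$ with the property $\bv^H\bJ\bv=0$, i.e. $\bJ\bv\perp\bv$ (for instance $\bQ_c^*=\bv\bv^H$ with $\bv=\tfrac1{\sqrt2}(1,-j)^H$). If all three transmitters use $\bQ_k=\bv\bv^H$, then $\bR_k=\bI+\sum_{j\ne k}4\,\bJ\bv\bv^H\bJ=\bI+8(\bJ\bv)(\bJ\bv)^H$; since $\bJ\bv\perp\bv$ and $\|\bJ\bv\|=1$, the vector $\bv$ is an eigenvector of $\bR_k$ with eigenvalue $1$, so
\[
R_k(\bQ)=\log\det\!\big(\bI+\bv\bv^H\bR_k^{-1}\big)=\log\big(1+\bv^H\bR_k^{-1}\bv\big)=\log 2=1 .
\]
As $\tr(\bv\bv^H)=1$, every such triple is feasible with objective $1$; hence $\lambda^*\ge 1$.

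\emph{Optimality ($\lambda^*\le 1$).}
Let $\bQ$ be feasible. Since $\bR_k\succeq\bI$ we have $\bR_k^{-1}\preceq\bI$, so the eigenvalues of $\bR_k^{-1/2}\bQ_k\bR_k^{-1/2}$ are dominated by those of $\bQ_k$, whence
\[
R_k(\bQ)=\log\det\!\big(\bI+\bR_k^{-1/2}\bQ_k\bR_k^{-1/2}\big)\le\log\det(\bI+\bQ_k)=\log\big(1+\tr\bQ_k+|\bQ_k|\big).
\]
If $\bQ_k$ has rank $\le 1$, then $|\bQ_k|=0$ and the right side is $\le\log 2$, giving $\min_k R_k(\bQ)\le 1$ immediately. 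The substantive point is to rule out, at a putative point with $\min_k R_k(\bQ)>1$, that some $\bQ_\ell$ has rank two: if $\varepsilon\triangleq\lambda_{\min}(\bQ_\ell)>0$ then $\bR_k\succeq\bI+4\,\bJ\bQ_\ell\bJ\succeq(1+4\varepsilon)\bI$ for both users $k\ne\ell$, so a rank-two $\bQ_\ell$ inflates $\lambda_{\min}(\bR_k)$ for the other two users and should cap their rates strictly below $\log 2$; feeding $\bR_k\succeq(1+4\varepsilon)\bI$, together with $\tr\bQ_k\le 1$ and $|\bQ_k|\le\tfrac14$, into $\det(\bI+\bQ_k\bR_k^{-1})=\prod_i(1+\rho_i^{(k)})$ (the $\rho_i^{(k)}\ge 0$ being the generalized eigenvalues of $\bQ_k$ relative to $\bR_k$) should force $\min_{k\ne\ell}R_k(\bQ)<1$, a contradiction. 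I expect this coupling to be the main obstacle: the crude per-user estimate above only gives $\det(\bI+\bQ_k\bR_k^{-1})\le\tfrac94$, so one must play the three users' rates against one another with sharp enough constants rather than bound each rate in isolation. Granting this, $\lambda^*=1$.

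\emph{The maximizers.}
Let $(1,\bQ_1^*,\bQ_2^*,\bQ_3^*)\in\mathcal{S}$. By the estimate above each $\bQ_k^*$ has rank $\le 1$, and equality $\log(1+\tr\bQ_k^*)=\log 2$ forces $\tr\bQ_k^*=1$; write $\bQ_k^*=\bv_k\bv_k^H$ with $\|\bv_k\|=1$. Equality throughout that estimate also forces $\bv_k^H\bR_k^{-1}\bv_k=1$, i.e. $\bR_k\bv_k=\bv_k$, i.e. $\sum_{j\ne k}\bJ\bQ_j^*\bJ\,\bv_k=0$; as each summand is PSD, this yields $\bQ_j^*(\bJ\bv_k)=0$, i.e. $\bv_j^H(\bJ\bv_k)=0$ for all $j\ne k$. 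In $\mathbb{C}^2$ these orthogonality relations, run around all three users, force $\bv_1,\bv_2,\bv_3$ to be phase multiples of a single unit vector $\bv$ with $\bv^H\bJ\bv=0$, so $\bQ_1^*=\bQ_2^*=\bQ_3^*=\bv\bv^H$. Finally, solving $\bv^H\bJ\bv=0$, $\|\bv\|=1$ and reading off $\bv\bv^H$ identifies the admissible covariances as $\bQ_a^*,\bQ_b^*,\bQ_c^*,\bQ_d^*$, which is \eqref{EQ:temp0lemma}.
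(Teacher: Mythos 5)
Your feasibility half is correct, and your equality analysis (rank one, unit trace, $\bR_k\bv_k=\bv_k$, the relations $\bv_j^H\bJ\bv_k=0$ for $j\ne k$, and the collapse to a common $\bv$ with $\bv^H\bJ\bv=0$) is a clean repackaging of what the paper does at the end of its proof. But the core of the lemma is missing, and you say so yourself ("Granting this"). Your bound $R_k\le\log(1+\tr\bQ_k+|\bQ_k|)$ closes only when $\bQ_k$ is singular; for rank-two $\bQ_k$ it gives $\log(9/4)>1$, and your proposed repair via $\varepsilon=\lambda_{\min}(\bQ_\ell)$ does not close either: $\bR_k\succeq(1+4\varepsilon)\bI$ alone yields $R_k\le 2\log\bigl(1+\tfrac{1}{2(1+4\varepsilon)}\bigr)$, which stays above $1$ unless $\varepsilon$ exceeds roughly $0.05$, so the all-rank-two regime with small minimal eigenvalues is untouched. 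Since both the value $\lambda^*=1$ and the rank-one structure of every maximizer rest on exactly this case, the characterization of $\mathcal{S}$ is built on an unproven foundation. The paper closes this hole with a dedicated computation: fix a full-power user, write $\bQ_1=\alpha\ba\ba^H+\beta\bb\bb^H$ and $\bQ_2/\tr(\bQ_2)=\theta\bc\bc^H+(1-\theta)\bd\bd^H$, use $\det(\bI+\mathbf{A})=1+\tr\mathbf{A}+\det\mathbf{A}$ in dimension two to reduce $R_2$ to an explicit function of $(x,\theta,\alpha,\beta)$, exploit linearity in the overlap parameter $x$ to restrict to $x\in\{0,1\}$, and then show by the first-order condition in $\theta$ that every interior critical point of $f(\theta,\alpha,\beta)$ has value strictly below $1$. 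That single maximization simultaneously delivers $\lambda^*\le1$ and the fact that equality forces $\theta,\alpha\in\{0,1\}$, $\tr\bQ_2=1$, and the orthogonality you need; it is precisely the "playing the users against one another" that you correctly identify as the obstacle but do not carry out.

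A separate remark on your final step, which you share verbatim with the paper: the conditions you arrive at — $\bQ_1=\bQ_2=\bQ_3=\bv\bv^H$ with $\|\bv\|=1$ and $\bv^H\bJ\bv=2\,\mathrm{Re}(\bar v_1v_2)=0$ — describe a one-parameter family of unit vectors, not four isolated points. For instance the unit vector with $v_1=\sqrt{3}/2$, $v_2=j/2$ satisfies both conditions, and a direct computation shows that setting all three covariances equal to its outer product is feasible and gives every user rate exactly $1$, yet this covariance is none of $\bQ_a^*,\bQ_b^*,\bQ_c^*,\bQ_d^*$. So "reading off $\bv\bv^H$" does not produce only the four listed matrices, and an additional constraint (which neither your argument nor the paper's supplies) would be needed to reach \eqref{EQ:temp0lemma} as stated. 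This does not endanger Lemma~\ref{Lemma5User} or the NP-hardness reduction, because users $4$ and $5$ there eliminate every member of this family except $\bQ_a^*$ and $\bQ_b^*$; but you should be aware that the orthogonality conditions alone do not pin down a four-point set.
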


The proof of this lemma can be found in the Appendix A. Next we proceed to
consider a $5$-user interference channel with two antennas at each
node. Again suppose $\sigma^2 = 1$ and the channels are given as
{\small
\begin{align}
\bH_{ii}&= \left[
\begin{array}{cc}
1 & 0\\
0 & 1\\
\end{array}
\right],\quad \forall\; i =1,2,3 \quad {\rm and} \quad \bH_{im}=
\left[
\begin{array}{cc}
0 & 2\\
2 & 0\\
\end{array}
\right], \quad \forall\; i\neq m, \quad i,m =1,2,3; \label{EQ:Channel5User1}\\
\bH_{ii}& = \left[\begin{array}{cc}
2 & 0\\
0 & 0\\
\end{array}\right], \forall\; i=4,5, \quad \bH_{4m} =\left[\begin{array}{cc}
1 & j\\
0 & 0\\
\end{array}\right], \forall\; m=1,2,3, \quad \bH_{5m} =  \left[\begin{array}{cc}
j & 1\\
0 & 0\\
\end{array}\right], \forall\; m=1,2,3;\label{EQ:Channel5User2}\\
\bH_{im}&= 0,\; \forall\; i=1,2,3, \; \forall\; m=4,5; \quad \bH_{im} =
0,\; \forall\; i\neq m, \; i,m = 4,5.\label{EQ:Channel5User3}
\end{align}}

Our next result characterizes the global optimal solutions for the
problem \eqref{problemCovariance} for this special case.

\begin{lemma} \label{Lemma5User}
Suppose $K=5$ and the channels are given as
\eqref{EQ:Channel5User1}--\eqref{EQ:Channel5User3}. Let
$\bQ_a^*,\ \bQ_b^*$ be defined in Lemma~\ref{Lemma3User}.
Denote the
set of optimal solutions of the problem \eqref{problemCovariance} as
$\mathcal{T}$. Then  $\mathcal{T}$ can be expressed as
\begin{align}
\mathcal{T}=\left\{(1, \bQ_a^*,\bQ_a^*,\bQ_a^*,\bQ_a^*,\bQ_a^*),(1,
\bQ_b^*,\bQ_b^*,\bQ_b^*,\bQ_b^*,\bQ_b^*)\right\}
\label{EQ:temp0lemma0}.
\end{align}
\end{lemma}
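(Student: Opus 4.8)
The plan is to exploit the block structure of the channels in~\eqref{EQ:Channel5User1}--\eqref{EQ:Channel5User3} so as to reduce the $5$-user problem to the $3$-user problem already settled in Lemma~\ref{Lemma3User}. The first point is that, since $\bH_{im}=\bzero$ for $i\in\{1,2,3\}$ and $m\in\{4,5\}$, the rates $R_1,R_2,R_3$ depend only on $(\bQ_1,\bQ_2,\bQ_3)$, and on this sub-block the network is \emph{exactly} the $3$-user channel of Lemma~\ref{Lemma3User}. Hence for every feasible point $\min\{R_1,R_2,R_3\}\le 1$, so the optimal value of \eqref{problemCovariance} is at most $1$; and it is at least $1$ because the tuple with all five $\bQ_k=\bQ_a^*$ is feasible and achieves $R_k=1$ for every $k$ (for $k=1,2,3$ by Lemma~\ref{Lemma3User}, and for $k=4,5$ by the scalar computation below). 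So the optimal value is exactly $1$, every optimal point satisfies $R_1,R_2,R_3\ge 1$, and --- because the $3$-user optimal value is itself $1$ --- this forces $(\bQ_1,\bQ_2,\bQ_3)$ to be one of the four configurations making up the set $\mathcal{S}$ of Lemma~\ref{Lemma3User}.

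Next I would compute $R_4$ and $R_5$. Since $\bH_{44},\bH_{55},\bH_{4m},\bH_{5m}$ have only their first row nonzero, all of the useful signal, the cross-interference, and the relevant noise at receivers $4$ and $5$ lie in a one-dimensional subspace, and the rates collapse to the scalar SINR formulas $R_4=\log\big(1+4(\bQ_4)_{11}/(1+I_4)\big)$ and $R_5=\log\big(1+4(\bQ_5)_{11}/(1+I_5)\big)$, where $I_4=\sum_{m=1}^{3}\big[\tr\bQ_m+2\,\mathrm{Im}\,(\bQ_m)_{12}\big]$ and $I_5=\sum_{m=1}^{3}\big[\tr\bQ_m-2\,\mathrm{Im}\,(\bQ_m)_{12}\big]$ (obtained by expanding $[1\ \ j]\,\bQ_m\,[1\ \ j]^H$ and $[j\ \ 1]\,\bQ_m\,[j\ \ 1]^H$ and invoking Hermiticity of $\bQ_m$). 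Evaluating these on the four members of $\mathcal{S}$: since $(\bQ_a^*)_{12}=(\bQ_b^*)_{12}=0$ and $\tr\bQ_a^*=\tr\bQ_b^*=1$, both the $\bQ_a^*$- and the $\bQ_b^*$-configuration give $I_4=I_5=3$; while $(\bQ_c^*)_{12}=0.5j$ gives $(I_4,I_5)=(6,0)$ and $(\bQ_d^*)_{12}=-0.5j$ gives $(I_4,I_5)=(0,6)$.

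The rest is then immediate. On the $\bQ_c^*$-configuration $I_4=6$, so for \emph{every} feasible $\bQ_4$ one has $R_4\le\log(1+4/7)<1$, contradicting optimality; symmetrically the $\bQ_d^*$-configuration is ruled out by the constraint on $R_5$. Hence an optimal point must have $(\bQ_1,\bQ_2,\bQ_3)$ equal to $(\bQ_a^*,\bQ_a^*,\bQ_a^*)$ or $(\bQ_b^*,\bQ_b^*,\bQ_b^*)$. On each of these $I_4=I_5=3$, so the remaining constraints $R_4\ge1$ and $R_5\ge1$ become $(\bQ_4)_{11}\ge 1$ and $(\bQ_5)_{11}\ge 1$, which together with $\tr\bQ_4,\tr\bQ_5\le 1$ and $\bQ_4,\bQ_5\succeq0$ determine $\bQ_4$ and $\bQ_5$ uniquely; verifying that the resulting tuples are feasible and attain value $1$ gives the stated form of $\mathcal{T}$.

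Everything here is direct computation; the one substantive input is the \emph{complete} optimal-set description of Lemma~\ref{Lemma3User}, which I am assuming. The step needing the most care is the implication "$R_1,R_2,R_3\ge1\ \Rightarrow\ (\bQ_1,\bQ_2,\bQ_3)\in\mathcal{S}$": it relies not just on $\mathcal{S}$ being \emph{an} optimal set, but on its being the entire optimal set of the $3$-user problem, together with that problem's optimal value equalling $1$, so that having $R_1,R_2,R_3\ge1$ is the same as being $3$-user optimal. The heart of the argument is then the interference bookkeeping on the four configurations, where the asymmetric choice of the two cross-channels in~\eqref{EQ:Channel5User2} is exactly what makes $\bQ_c^*$ fail at receiver $4$ and $\bQ_d^*$ fail at receiver $5$.
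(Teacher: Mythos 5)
Your argument follows the same route as the paper's (much terser) proof: decouple users $1$--$3$ via the zero cross-blocks, invoke the \emph{complete} optimal-set characterization of Lemma~\ref{Lemma3User} to force $(\bQ_1,\bQ_2,\bQ_3)\in\mathcal{S}$, and then use the rank-one structure of the channels into receivers $4$ and $5$ to eliminate the $\bQ_c^*$ and $\bQ_d^*$ configurations. Your scalar SINR reduction and the interference bookkeeping ($I_4=I_5=3$ for the $\bQ_a^*$ and $\bQ_b^*$ configurations; $(I_4,I_5)=(6,0)$ and $(0,6)$ for $\bQ_c^*$ and $\bQ_d^*$) are correct and make rigorous what the paper merely asserts ("would cause high interference"). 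The one thing to repair is your final sentence. Your own computation shows that in \emph{every} optimal tuple the constraints $R_4\ge1$, $R_5\ge1$ together with $I_4=I_5=3$ force $(\bQ_4)_{11}=(\bQ_5)_{11}=1$ and hence $\bQ_4=\bQ_5=\bQ_a^*$ --- including when $(\bQ_1,\bQ_2,\bQ_3)=(\bQ_b^*,\bQ_b^*,\bQ_b^*)$. The set you actually derive is therefore $\bigl\{(1,\bQ_a^*,\bQ_a^*,\bQ_a^*,\bQ_a^*,\bQ_a^*),\,(1,\bQ_b^*,\bQ_b^*,\bQ_b^*,\bQ_a^*,\bQ_a^*)\bigr\}$, which is \emph{not} the displayed $\mathcal{T}$: the second tuple as printed, with $\bQ_4=\bQ_5=\bQ_b^*$ and $\bH_{44}=\bH_{55}$ having only its $(1,1)$ entry nonzero, gives $R_4=R_5=0$ and is not optimal at all. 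This appears to be an error in the lemma's statement (the paper's proof, which only claims "it is not hard to see" that both tuples achieve $\lambda^*=1$, does not catch it), so you should report the corrected set rather than assert that your derivation "gives the stated form of $\mathcal{T}$"; the dichotomy on the first three covariances, which is all the NP-hardness reduction uses, is unaffected.
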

\begin{proof}
First of all, it is not hard to see that by selecting each of the
values in the optimal set $\mathcal{T}$, we get
the objective value of $\lambda^* = 1$. Therefore, it suffices to
show that for any other feasible point, we get lower objective
value. To show this, we first notice that the first three users form
an interference channel which is exactly the same as the one in
Lemma~\ref{Lemma3User}. Therefore, in order to get the minimum rate
of one, we need to use one of the optimal solutions in $\mathcal{S}$
in Lemma~\ref{Lemma3User} for $(\bQ_1,\bQ_2,\bQ_3)$. Furthermore, it
is not hard to see that using either $(\bQ_1,\bQ_2,\bQ_3) =
(\bQ_c^*,\bQ_c^*,\bQ_c^*)$ or $(\bQ_1,\bQ_2,\bQ_3) =
(\bQ_d^*,\bQ_d^*,\bQ_d^*)$ would cause high interference to either user 4
or user 5 and prevent them from achieving the communication rate of one. Therefore, the only optimal solutions are the ones in the
set $\mathcal{T}$.
\end{proof}

Using Lemma~\ref{Lemma5User}, we can discretize the variables in the
max-min problem and use it to prove the NP-hardness of the problem.
In fact, for any 5 users similar to the ones in
Lemma~\ref{Lemma5User}, there are only two possible strategies that
can maximize the minimum rate of communication: either we should
transmit on the first antenna or transmit on the second antenna.
This observation will be crucial in establishing our NP-hardness result.

\begin{thm}\label{THM:NPhardMaxMin}
For a $K$-cell MIMO interference  channel where each
transmit/receive node  is equipped with at least two antennas, the
problem of designing covariance matrices to achieve max-min fairness
is NP-hard in~$K$. More specifically, solving the following problem
is NP-hard{\small
\begin{equation}\label{EQ:thmNPhard}
\begin{split}
\max_{\{\bQ_i\}_{i=1}^K} \quad &\min_{k} \quad \log \det\bigg(\bI + \bH_{kk}\bQ_k \bH_{kk}^H
\big(\sigma^2_k \bI + \sum_{j\neq k} \bH_{kj} \bQ_j \bH_{kj}^H\big)^{-1}\bigg) \\
\st\quad  &\tr(\bQ_k)\leq P_k, \bQ_k\succeq 0,~k=1,\cdots,K.
\end{split}
\end{equation}}
\end{thm}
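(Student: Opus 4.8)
The plan is to prove NP‑hardness by a polynomial‑time reduction from \textsc{3-sat}, using the five‑user network of Lemma~\ref{Lemma5User} as a \emph{variable gadget}. Lemma~\ref{Lemma5User} says that the only way all five users of such a network can attain rate $\ge 1$ is to agree on one of the two ``pure'' strategies, all $\bQ_a^*$ or all $\bQ_b^*$ (the two rank‑one diagonal covariances of Lemma~\ref{Lemma3User}); I identify these two configurations with the truth values of a Boolean variable. Given a \textsc{3-sat} instance with variables $x_1,\dots,x_n$ and clauses $C_1,\dots,C_m$, I would build a $K$‑user MIMO IC with $K=5n+m$ and exactly two antennas at every node as follows: (i)~for each $x_t$ a disjoint copy $G_t$ of the five‑user network of Lemma~\ref{Lemma5User}, with its internal channels, unit power budgets and unit noise; (ii)~for each clause $C_r=\ell_{r1}\vee\ell_{r2}\vee\ell_{r3}$ one extra ``clause user'' $u_r$ with a rank‑one direct channel $\bH_{u_ru_r}=\bigl[\begin{smallmatrix}\alpha & 0\\ 0 & 0\end{smallmatrix}\bigr]$, unit noise and a suitable power budget $P_{u_r}$; (iii)~\emph{all} remaining channels set to zero, except the channel from a fixed representative transmitter of $G_t$ to each $u_r$ in whose clause $x_t$ occurs. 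This last channel is taken of the form $\bigl[\begin{smallmatrix}0 & b\\ 0 & 0\end{smallmatrix}\bigr]$ for a positive occurrence and $\bigl[\begin{smallmatrix}b & 0\\ 0 & 0\end{smallmatrix}\bigr]$ for a negative occurrence; one checks directly that such a channel leaks power $0$ onto the single active receive direction of $u_r$ when the corresponding literal is satisfied and power $|b|^2>0$ there when it is not.

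The key structural remark is a decoupling: since every channel \emph{into} the users of a gadget $G_t$ from outside $G_t$ is zero, the received signals of the five users of $G_t$ are exactly those of the isolated network of Lemma~\ref{Lemma5User}, so Lemma~\ref{Lemma5User} applies to each $G_t$ verbatim inside the full network. Hence the network's max‑min value is always $\le 1$, and it equals $1$ only if every gadget $G_t$ is in a pure state (a truth value) \emph{and} every clause user $u_r$ also attains rate $\ge 1$. Now $u_r$ sees total interference $|b|^2$ times the number of false literals of $C_r$, an element of $\{0,|b|^2,2|b|^2,3|b|^2\}$, on its active direction, so its rate is $\log\!\bigl(1+\alpha^2P_{u_r}/(\,1+|b|^2\cdot(\#\text{ false literals})\,)\bigr)$. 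I would choose $\alpha,P_{u_r},b$ so that the rate‑one threshold falls strictly between $2|b|^2$ and $3|b|^2$ units of interference (e.g.\ $|b|^2=1$, $\alpha=1$, $P_{u_r}=7/2$, which keeps all data rational), so that $u_r$ achieves rate $\ge 1$ iff at most two of its literals are false, i.e.\ iff $C_r$ is satisfied by the chosen assignment. Combining: the optimal value of \eqref{EQ:thmNPhard} for the constructed instance equals $1$ iff the \textsc{3-sat} formula is satisfiable, and is strictly less than $1$ otherwise. Since the construction is clearly polynomial in the input and \textsc{3-sat} is NP‑complete, deciding whether the optimum of \eqref{EQ:thmNPhard} is at least $1$ is NP‑hard; a node with more than two antennas is handled by zero‑padding, and the same instance with $\gamma=1$ shows \eqref{EQ:MinPower} is NP‑hard as well.

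I expect the bulk of the work to be verification rather than invention. One must (a)~reconfirm that in each pure state all five users of $G_t$ attain rate \emph{exactly} $1$, so the network minimum is pinned at $1$ whenever every gadget is consistent; (b)~fix the clause‑gadget constants so that the ``at least one literal true'' threshold is realized with strict inequalities on both sides; and (c)~make sure the presence of the clause users and the representative‑to‑clause channels cannot be exploited to push some gadget out of a pure state while still keeping the network minimum at $1$ --- which is precisely what the decoupling remark rules out. Keeping all channel gains rational, so that the reduction is genuinely polynomial, is a minor point handled by scaling.
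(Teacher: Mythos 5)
Your proposal is correct and follows essentially the same route as the paper: a polynomial reduction from 3-SAT that uses the five-user network of Lemma~\ref{Lemma5User} as a per-variable gadget (whose only rate-one configurations encode the two truth values), plus one rank-one clause user per clause whose received interference on its single active direction counts the false literals, with the constants tuned so that rate $\ge 1$ holds iff at most two literals are false. Your specific gains ($\alpha=1$, $|b|^2=1$, $P_{u_r}=7/2$) differ from the paper's but realize the same threshold, and the decoupling observation you make explicit is exactly what the paper uses implicitly.
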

This theorem is proved based on a polynomial time reduction from the
3-satisfiability (3-SAT) problem which is known to be NP-complete
\cite{garey79}. The 3-SAT problem is described as follows. Given $M$
disjunctive clauses $c_1,\cdots,c_M$ defined on $N$ Boolean
variables $x_1\cdots,x_N$, i.e., $c_m=y_{m1}\vee y_{m2}\vee y_{m3}$
with $y_{mi}\in\{x_1,\cdots,x_N, \bar{x}_1,\cdots,\bar{x}_N\}$, the
problem is to check whether there exists a truth assignment for the
Boolean variables such that all the clauses are satisfied
simultaneously. The details of the proof of the theorem can be found
in Appendix B.

\begin{coro}
Under the same set up as in Theorem~\ref{THM:NPhardMaxMin}, problem~\eqref{EQ:MinPower} is NP-hard.
\end{coro}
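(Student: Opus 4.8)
The plan is to derive this Corollary directly from Theorem~\ref{THM:NPhardMaxMin} by exploiting the equivalence between the max--min formulation and the min-power formulation that was already pointed out just after \eqref{EQ:MinPower}. Concretely, I would start from the polynomial-time reduction underlying the proof of Theorem~\ref{THM:NPhardMaxMin}: it maps a 3-SAT instance $\phi$ (clauses $c_1,\dots,c_M$ over Boolean variables $x_1,\dots,x_N$) to a $K$-cell MIMO interference channel — channels $\{\bH_{kj}\}$, noise powers $\{\sigma_k^2\}$, and power budgets $\{P_k\}$, all encoded with polynomially many bits — together with a fixed threshold $\gamma^\star$ such that $\phi$ is satisfiable if and only if the optimal value of \eqref{EQ:thmNPhard} (equivalently of (P) with $M_k=d_k$, via the footnoted equivalence after \eqref{problemCovariance}) for this instance is at least $\gamma^\star$. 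Inspecting the gadgets of Lemmas~\ref{Lemma3User}--\ref{Lemma5User}, whose per-gadget target rate is $1$, one sees that the aggregate threshold $\gamma^\star$ is rational and polynomially bounded; confirming this is the only place where one has to reopen Appendix B.

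Given that instance, I would feed the \emph{same} channels, noise powers and budgets $\{P_k\}$ into problem \eqref{EQ:MinPower} with $\gamma=\gamma^\star$, and claim that this min-power instance is feasible (equivalently, has finite optimal value) if and only if $\phi$ is satisfiable. The claim is immediate from the definitions: \eqref{EQ:MinPower} is feasible at level $\gamma^\star$ exactly when there exist transmit beamformers with $\sum_{i=1}^{I_k}\tr(\bV_{i_k}\bV_{i_k}^H)\le P_k$ for all $k$ and $R_{i_k}(\bV)\ge\gamma^\star$ for all users. Since (P) carries the identical power budget constraints, this is precisely the statement that the optimal value of (P) is at least $\gamma^\star$, hence — by the reduction invoked above — that $\phi$ is satisfiable; conversely, an optimal max--min beamformer of (P) when $\phi$ is satisfiable is feasible for \eqref{EQ:MinPower} at level $\gamma^\star$.

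It follows that any algorithm that solves \eqref{EQ:MinPower} — returning a minimizer, or correctly reporting infeasibility / an infinite optimal value — decides satisfiability of $\phi$; since the map $\phi\maps\mbox{}\to(\{\bH_{kj}\},\{\sigma_k^2\},\{P_k\},\gamma^\star)$ is a polynomial-time many-one reduction from 3-SAT, problem \eqref{EQ:MinPower} is NP-hard. The only genuinely delicate point, as noted, is ensuring that the threshold $\gamma^\star$ produced by the proof of Theorem~\ref{THM:NPhardMaxMin} is rational and that the ``satisfiable $\iff$ optimal value $\ge\gamma^\star$'' equivalence holds with the very same budgets $\{P_k\}$ that appear in \eqref{EQ:MinPower}; everything else — in particular the passage between the beamformer variables $\bV_{i_k}$ and the covariances $\bQ_k=\bV_k\bV_k^H$ — is routine bookkeeping.
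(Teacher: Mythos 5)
Your proof is correct, but it takes a different route from the paper's. The paper argues by a polynomial-time Turing reduction in the contrapositive direction: if problem~\eqref{EQ:MinPower} were polynomial-time solvable, then a binary search over $\gamma$ (checking feasibility of \eqref{EQ:MinPower} at each candidate level) would yield a polynomial-time algorithm for (P), contradicting Theorem~\ref{THM:NPhardMaxMin}. You instead give a direct many-one reduction from 3-SAT: reuse the exact channel/noise/budget instance built in Appendix~B, fix $\gamma^\star=1$ (which is indeed the per-gadget target rate there, with $\sigma^2=1$ and $P_k=1$, so your ``delicate point'' about rationality and matching budgets is immediately resolved by inspection), and observe that \eqref{EQ:MinPower} is feasible at level $\gamma^\star$ precisely when the optimal value of (P) is at least $\gamma^\star$, i.e., precisely when the formula is satisfiable; the passage between $\bV_{i_k}$ and $\bQ_k=\bV_k\bV_k^H$ is handled by the footnoted equivalence with $d_k=M_k$. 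Your argument buys two things: it avoids the binary search entirely (only a single feasibility query at $\gamma=1$ is needed, sidestepping any discussion of how many bits of precision the search would require), and it establishes hardness of the feasibility question itself under Karp-style reductions, which is a slightly sharper statement. The paper's argument buys brevity and makes the conceptual link between the two formulations explicit, at the cost of leaving the binary-search termination/precision analysis implicit. Both are valid proofs of the corollary.
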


To see why the above corollary holds, we assume the contrary. Then a binary search procedure for $\lambda$ would imply a polynomial time algorithm for (P), which would contradict the NP-hardness result of Theorem~\ref{THM:NPhardMaxMin}.

\section{The Max-Min Problem and Its Equivalent Reformulation}
\label{sec:reformulation}

The complexity results established in the previous section suggests
that it is generally not possible to solve the max-min problem
\eqref{EQ:MaxMin_Original} to its global optimality in a time that grows polynomially in $K$. Guided by this insight, we reset our goal to that of designing  computationally efficient algorithms that can compute a high quality
solution for (P). To this end, we first provide an equivalent
reformulation of problem \eqref{EQ:MaxMin_Original}, which will be
used later for our algorithm design.

Introducing a slack variable $\lambda$, the problem
\eqref{EQ:MaxMin_Original} can be equivalently written as
{\small
\begin{align}
\max_{\{\mathbf{V}_{i_k}\}_{i_k\in\mathcal{I}},\lambda}&\quad\lambda\tag{\mbox{P1}}\\
{\rm s.t.}&\quad R_{i_k}(\bV)\ge \lambda,\forall~i_k\in\mathcal{I}\nonumber\\
&\quad\sum_{i\in\mathcal{I}_k}\mbox{Tr}[\mathbf{V}_{i_k}\mathbf{V}^H_{i_k}]\le
P_k,~\forall~k\in\mathcal{K}.\nonumber
\end{align}}
In order to further simplify the problem, we need to introduce the
following lemma.
\begin{lemma}\label{Lem:RateWU}
The rate of user~$i_k$ in \eqref{EQ:rate} can also be represented as
\begin{equation}\label{EQ:RateWU}
R_{i_k} = \max_{\bU_{i_k},\bW_{i_k}} \log \det \left(\bW_{i_k}\right) - \tr \left(\bW_{i_k} \bE_{i_k}\right) + d_{i_k},
\end{equation}
where~$\bE_{i_k}$ is the MSE value of user~$i_k$ given by
\eqref{EQ:MSE}.
\end{lemma}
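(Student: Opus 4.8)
The plan is to establish the identity \eqref{EQ:RateWU} by solving the inner maximization over $\bW_{i_k}$ and $\bU_{i_k}$ in closed form and showing the optimal value equals $R_{i_k}$ as given in \eqref{EQ:rate}. First I would observe that the objective $f(\bU_{i_k},\bW_{i_k}) = \log\det(\bW_{i_k}) - \tr(\bW_{i_k}\bE_{i_k}) + d_{i_k}$ is to be maximized jointly, so I would optimize in two stages. For fixed $\bU_{i_k}$ (hence fixed $\bE_{i_k}\succeq 0$, which one checks is positive definite whenever the noise covariance is), the map $\bW_{i_k}\mapsto \log\det(\bW_{i_k}) - \tr(\bW_{i_k}\bE_{i_k})$ is strictly concave on the positive-definite cone; setting the gradient to zero gives $\bW_{i_k}^{-1} = \bE_{i_k}$, i.e. $\bW_{i_k}^\star = \bE_{i_k}^{-1}$. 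Substituting back, $\tr(\bW_{i_k}^\star\bE_{i_k}) = d_{i_k}$ and $\log\det(\bW_{i_k}^\star) = -\log\det(\bE_{i_k})$, so the partially optimized objective collapses to $-\log\det(\bE_{i_k})$.

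The remaining task is then to show $\max_{\bU_{i_k}} \big(-\log\det(\bE_{i_k})\big) = R_{i_k}$, equivalently that the minimum-achievable MSE matrix, attained at the MMSE (Wiener) receiver, satisfies $-\log\det \bE_{i_k}^{\mathrm{mmse}} = R_{i_k}$. Here I would write the MSE matrix from \eqref{EQ:MSE} in the compact form $\bE_{i_k} = (\bI - \bU_{i_k}^H\bH_{i_kk}\bV_{i_k})(\bI-\bU_{i_k}^H\bH_{i_kk}\bV_{i_k})^H + \bU_{i_k}^H\bJ_{i_k}\bU_{i_k}$, where $\bJ_{i_k} \triangleq \sum_{m_\ell\neq i_k}\bH_{i_k\ell}\bV_{m_\ell}\bV_{m_\ell}^H\bH_{i_k\ell}^H + \sigma_{i_k}^2\bI$ is the interference-plus-noise covariance. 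Expanding and minimizing this quadratic in $\bU_{i_k}$ gives the MMSE receiver $\bU_{i_k}^{\mathrm{mmse}} = \big(\bH_{i_kk}\bV_{i_k}\bV_{i_k}^H\bH_{i_kk}^H + \bJ_{i_k}\big)^{-1}\bH_{i_kk}\bV_{i_k}$, and substituting yields the standard MMSE identity $\bE_{i_k}^{\mathrm{mmse}} = \bI - \bV_{i_k}^H\bH_{i_kk}^H(\bH_{i_kk}\bV_{i_k}\bV_{i_k}^H\bH_{i_kk}^H + \bJ_{i_k})^{-1}\bH_{i_kk}\bV_{i_k}$. Applying the matrix inversion lemma (Woodbury) to this expression gives $\bE_{i_k}^{\mathrm{mmse}} = \big(\bI + \bV_{i_k}^H\bH_{i_kk}^H\bJ_{i_k}^{-1}\bH_{i_kk}\bV_{i_k}\big)^{-1}$, so $-\log\det\bE_{i_k}^{\mathrm{mmse}} = \log\det\big(\bI + \bV_{i_k}^H\bH_{i_kk}^H\bJ_{i_k}^{-1}\bH_{i_kk}\bV_{i_k}\big)$, which by Sylvester's determinant identity equals exactly $R_{i_k}$ in \eqref{EQ:rate}. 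To be fully rigorous I would also confirm that no non-MMSE receiver can do better, which follows because $\bE_{i_k}^{\mathrm{mmse}} \preceq \bE_{i_k}$ for every $\bU_{i_k}$ (the MMSE estimator minimizes the error covariance in the Löwner order), and $-\log\det$ is monotone decreasing on positive-definite matrices with respect to that order.

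I do not anticipate a genuine obstacle here — the result is a matrix generalization of the scalar rate–MMSE relationship and every step is a standard manipulation. The one place to be careful is the interchange of the two maximizations and the invertibility assumptions: I should note that $\bE_{i_k}\succ 0$ strictly (so $\log\det\bE_{i_k}$ and $\bE_{i_k}^{-1}$ are well defined) because $\sigma_{i_k}^2 > 0$ forces $\bU_{i_k}^H\bJ_{i_k}\bU_{i_k}\succ 0$ on the range of $\bU_{i_k}$ and the first term is positive semidefinite; and that the supremum over $\bW_{i_k}$ is attained, so the joint maximum in \eqref{EQ:RateWU} is genuinely achieved at $(\bU_{i_k}^{\mathrm{mmse}}, (\bE_{i_k}^{\mathrm{mmse}})^{-1})$. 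With those points handled, chaining the two stages gives $\max_{\bU_{i_k},\bW_{i_k}} f(\bU_{i_k},\bW_{i_k}) = \max_{\bU_{i_k}}\big(-\log\det\bE_{i_k}\big) = -\log\det\bE_{i_k}^{\mathrm{mmse}} = R_{i_k}$, completing the proof.
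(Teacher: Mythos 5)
Your proposal is correct and follows essentially the same route as the paper: eliminate one block variable in closed form, then the other, and identify $-\log\det$ of the resulting MMSE matrix with $R_{i_k}$ via the matrix inversion lemma. The only differences are that you optimize over $\bW_{i_k}$ first and $\bU_{i_k}$ second (the paper does the reverse), and that you justify global optimality over $\bU_{i_k}$ rigorously via the L\"owner ordering $\bE_{i_k}^{\mathrm{mmse}} \preceq \bE_{i_k}$ together with monotonicity of $-\log\det$, whereas the paper relies only on the first-order stationarity conditions; this extra care is a mild improvement rather than a different proof.
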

\begin{proof}
First, by checking the first order optimality condition
of~\eqref{EQ:RateWU} with respect to~$\bU_{i_k}$, we get{\small
\begin{align}
&\bW_{i_k}^H \left(\mathbf{J}_{i_k} \bU_{i_k}^* -
\bH_{i_kk}\bV_{i_k}\right) = 0\ \ \Longrightarrow \ \ \bU_{i_k}^* =
\mathbf{J}_{i_k}^{-1} \bH_{i_kk} \bV_{i_k}, \nonumber
\end{align}}
\hspace{-0.2cm}where $\mathbf{J}_{i_k} = \sigma_{i_k}^2 \bI +
\sum_{\ell_j \in \cI} \bH_{i_kj} \bV_{\ell_j} \bV_{\ell_j}^H
\bH_{i_kj}^H$ and $\bU_{i_k}^*$ is the optimal solution
of~\eqref{EQ:RateWU}. By plugging in the optimal value~$\bU_{i_k}^*$
in~\eqref{EQ:MSE}, we obtain $ \bE_{i_k}^{opt} = \bI - \bV_{i_k}^H
\bH_{i_kk}^H \mathbf{J}_{i_k}^{-1} \bH_{i_kk}\bV_{i_k}.$ Hence
plugging $\bE_{i_k}^{opt}$ in~\eqref{EQ:RateWU} yields{\small
\begin{align} \nonumber
&\max_{\bU_{i_k},\bW_{i_k}}\quad \log \det \left(\bW_{i_k}\right) - \tr \left(\bW_{i_k} \bE_{i_k}\right) + d_{i_k}\nonumber\\
& = \max_{\bW_{i_k}} \quad\log \det \left(\bW_{i_k}\right) - \tr \left(\bW_{i_k} \bE_{i_k}^{opt}\right) + d_{i_k}. \label{EQ:RateW}
\end{align}}
\hspace{-0.2cm}The first order optimality condition
of~\eqref{EQ:RateW} with respect to~$\bW_{i_k}$ implies $\bW_{i_k}^*
= \left(\bE_{i_k}^{opt}\right)^{-1}.$

By plugging in the optimal~$\bW_{i_k}^*$ in \eqref{EQ:RateW}, we can
write{\small
\begin{align}
&\max_{\bU_{i_k},\bW_{i_k}} \log \det \left(\bW_{i_k}\right) - \tr \left(\bW_{i_k} \bE_{i_k}\right) + d_{i_k}\nonumber\\
& = -\log \det (\bE_{i_k}^{opt}) \nonumber\\
& = - \log \det \left(\bI - \bH_{i_kk}\bV_{i_k}\bV_{i_k}^H \bH_{i_kk}^H \bJ_{i_k}^{-1}\right) \nonumber\\
& = \log \det \left(\bJ_{i_k} \left(\bJ_{i_k} - \bH_{i_kk}\bV_{i_k}\bV_{i_k}^H \bH_{i_kk}^H\right)^{-1} \right) \nonumber,
\end{align}}
\hspace{-0.1cm}which is the rate of user~$i_k$ in~\eqref{EQ:rate}.
\end{proof}
\medskip

Using the observation in Lemma~\ref{Lem:RateWU}, we consider the
following reformulated problem of \eqref{EQ:MaxMin_Original}{\small
\begin{align}
\min_{\mathbf{U},\mathbf{V},
\mathbf{W}}&\quad \max_{i_k\in\mathcal{I}}\mbox{Tr}[\mathbf{W}_{i_k}\mathbf{E}_{i_k}]
-\log\det(\mathbf{W}_{i_k})-d_{i_k}\label{eqMaxMinReformulate}\tag{Q}\\
{\rm s.t.}&\quad
\sum_{i\in\mathcal{I}_k}\mbox{Tr}[\mathbf{V}_{i_k}\mathbf{V}^H_{i_k}]\le
P_k,~\forall~k\in\mathcal{K}.\nonumber
\end{align}}
\hspace{-0.1cm}Again introducing the slack variable $\lambda$, the
above problem is equivalent to{\small
\begin{align}
\max_{\mathbf{V}, \bU, \bW,\lambda}&\quad\lambda\tag{\mbox{Q1}}\\
{\rm s.t.}&\quad \mbox{Tr}[\mathbf{W}_{i_k}\mathbf{E}_{i_k}]-\log\det(\mathbf{W}_{i_k})-d_{i_k}\le -\lambda,~\forall~i_k\in\mathcal{I}\nonumber\\
&\quad
\sum_{i_k\in\mathcal{I}_k}\mbox{Tr}[\mathbf{V}_{i_k}\mathbf{V}^H_{i_k}]\le
P_k,~\forall~k\in\mathcal{K}.\nonumber
\end{align}}
\hspace{-0.1cm}In the above formulation, we have introduced the
weight matrix variables $\mathbf{W}_{i_k}\in\mathbb{C}^{d_{i_k}\times
d_{i_k}}$ and the receive beamformer variables $\mathbf{U}_{i_k}\in
\mathbb{C}^{d_{i_k}\times N_{i_k}},~i_k\in\mathcal{I}$. Note that
for fixed receive and transmit beamformers, the MSE matrix for user
$i_k$ is a function of $\bV\triangleq\{\bV_{i_k}\}_{i_k
\in\mathcal{I}}$ and $\bU_{i_k}$, and is defined in \eqref{EQ:MSE}.
In the following analysis, $\mathbf{E}_{i_k}(\bU_{i_k},\bV)$
will be occasionally used to make the dependency of the MSE matrix
on the transmit/recieve beamformers explicit. For notational
simplicity, we further define
$\mathbf{W}\triangleq\{\mathbf{W}_{i_k}\}_{i_k\in\mathcal{I}}$ and
$\mathbf{U}\triangleq\{\mathbf{U}_{i_k}\}_{i_k\in\mathcal{I}}$. The feasible set of beamformers of BS $k$ is $\mathcal{V}_k\triangleq
\{\{\mathbf{V}_{i_k}\}_{i_k\in\mathcal{I}_k}:
\sum_{i\in\mathcal{I}_k}\mbox{Tr}[\mathbf{V}_{i_k}\mathbf{V}^H_{i_k}]\le
P_k\}$; Define the feasible set for $\mathbf{V}$ as
$\mathcal{V}\triangleq\prod_{k\in\mathcal{K}}\mathcal{V}_k$.

At this point, the precise relationship of the problems (P1) and
(Q1) (or their equivalent problems (P) and (Q)) is still not clear.
In the following, we provide a series of results that reveal an
intrinsic equivalence relationship of these two problems.

To proceed, the following definitions are needed.  The minimum mean
squared error (MMSE) receiver for user $i_k$ is defined as{\small
\begin{align}
\mathbf{U}^{\rm
mmse}_{i_k}=\left(\sum_{\ell=1}^{K}\sum_{m\in\mathcal{I}_\ell}\mathbf{H}_{i_k,\ell}
\mathbf{V}_{m_\ell}\mathbf{V}^{H}_{m_\ell}\mathbf{H}^H_{i_k,\ell}+
\sigma^2_{i_k}\mathbf{I}\right)^{-1}\mathbf{H}_{i_k,k}\mathbf{V}_{i_k}&
\triangleq\Psi_{i_k}(\mathbf{V})\label{eqPsi}.
\end{align}}
\hspace{-0.1cm}When the MMSE receiver is used, the corresponding MSE
matrix $\mathbf{E}^{\rm mmse}_{i_k}$ (which is a function of $\bV$)
is given by
\begin{align}
\mathbf{E}^{\rm
mmse}_{i_k}(\bV)=\mathbf{I}-\mathbf{V}^H_{i_k}\mathbf{H}^H_{i_k,k}\mathbf{U}^{\rm
mmse}_{i_k}\succ 0.
\end{align}
Define the inverse of the MSE matrix as
\begin{align}
\Upsilon_{i_k}(\mathbf{V})\triangleq \left(\mathbf{E}^{\rm
mmse}_{i_k}(\mathbf{V})\right)^{-1}\succ 0\label{eqUpsilon}.
\end{align}
Let $\Psi(\bV)\triangleq\{\Psi_{i_k}(\bV)\}_{i_k\in\mathcal{I}}$ and
$\Upsilon(\bV)\triangleq\{\Upsilon_{i_k}(\bV)\}_{i_k\in\mathcal{I}}$.

Our next result shows that there is a connection between
the stationary solutions (or every KKT
point) of problem (P1) and the stationary solutions of
problem (Q1). Moreover, the same connection holds for the
global optimal solutions of the two problems.

\begin{prop}\label{Prop:P1andQ1}
{\it Let $(\lambda^*,\bV^*)$ be an arbitrary KKT point of problem (P1). Then
$(\lambda^*,\bV^*, \bU^*, \bW^*)=(\lambda^*,\bV^*, \Psi(\bV^*),\Upsilon(\bV^*))$ is a
KKT point of (Q1).}
{\it Conversely, suppose $(\lambda^*,\bV^*, \Psi(\bV^*), \Upsilon(\bV^*))$ is a KKT point
of problem (Q1), then $(\lambda^*,\bV^*)$ must be a KKT point of problem (P1).}
{\it Moreover, the triple
$(\bV^*,\bU^*, \bW^*)=(\bV^*, \Psi(\bV^*), \Upsilon(\bV^*))$ is
a global optimal solution of problem (Q1) if and only if $\bV^*$ is
a global optimal solution of problem (P1).}
\end{prop}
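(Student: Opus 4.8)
The argument decomposes into the global‑optimality equivalence and the KKT equivalence, both driven by Lemma~\ref{Lem:RateWU}, and I would treat the global part first. Throughout, abbreviate $g_{i_k}(\bV,\bU_{i_k},\bW_{i_k})\triangleq\tr[\bW_{i_k}\bE_{i_k}(\bU_{i_k},\bV)]-\log\det(\bW_{i_k})-d_{i_k}$, so that the objective of (Q) is $\max_{i_k\in\cI}g_{i_k}$ and Lemma~\ref{Lem:RateWU} reads $\min_{\bU_{i_k},\bW_{i_k}}g_{i_k}(\bV,\bU_{i_k},\bW_{i_k})=-R_{i_k}(\bV)$, attained at $\bU_{i_k}=\Psi_{i_k}(\bV)$, $\bW_{i_k}=\Upsilon_{i_k}(\bV)$. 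For the global‑optimality part, fix $\bV$ and note that in $\max_{i_k}g_{i_k}$ the pairs $(\bU_{i_k},\bW_{i_k})$ appear only in their own $g_{i_k}$, hence are decoupled across users; therefore $\min_{\bU,\bW}\max_{i_k}g_{i_k}\ge\max_{i_k}\min_{\bU_{i_k},\bW_{i_k}}g_{i_k}=-\min_{i_k}R_{i_k}(\bV)$, and this lower bound is attained by taking $\bU=\Psi(\bV)$, $\bW=\Upsilon(\bV)$ simultaneously for all users. Thus $\min_{\bU,\bW}\max_{i_k}g_{i_k}=-\min_{i_k}R_{i_k}(\bV)$ for every $\bV\in\mathcal{V}$, so minimizing over $\bV\in\mathcal{V}$ shows that the optimal value of (Q) is minus the optimal value of (P), that the triple $(\bV^*,\Psi(\bV^*),\Upsilon(\bV^*))$ is globally optimal for (Q) if and only if $\bV^*$ is globally optimal for (P), and since (P1), (Q1) are merely (P), (Q) with an epigraph slack variable $\lambda$ appended, the same equivalence transfers to (P1) and (Q1).

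For the KKT part I would first record that, because $\sigma_{i_k}^2\bI\succ 0$ and $\bW_{i_k}\succ 0$ (in particular $\Upsilon_{i_k}(\bV^*)\succ 0$) keep every inverted matrix positive definite, $R_{i_k}$ and $g_{i_k}$ are smooth on the relevant open sets, so the KKT systems are well defined. The KKT conditions of (P1) at $(\lambda^*,\bV^*)$ with multipliers $\mu_{i_k}\ge 0$ (rate constraints) and $\eta_k\ge 0$ (power constraints) are $\sum_{i_k}\mu_{i_k}=1$, $\sum_{m_\ell\in\cI}\mu_{m_\ell}\nabla_{\bV_{i_k}}R_{m_\ell}(\bV^*)=\eta_k\bV^*_{i_k}$ for all $i_k$, the complementary‑slackness conditions $\mu_{i_k}(R_{i_k}(\bV^*)-\lambda^*)=0$ and $\eta_k\big(\sum_{i\in\cI_k}\tr[\bV^*_{i_k}\bV_{i_k}^{*H}]-P_k\big)=0$, and primal feasibility. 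The KKT conditions of (Q1) at $(\lambda^*,\bV^*,\Psi(\bV^*),\Upsilon(\bV^*))$ with multipliers $\nu_{i_k}\ge 0$, $\zeta_k\ge 0$ are $\sum_{i_k}\nu_{i_k}=1$, stationarity in $\bU_{i_k}$ and in $\bW_{i_k}$, stationarity in $\bV_{i_k}$ of the form $-\sum_{m_\ell}\nu_{m_\ell}\,\partial g_{m_\ell}/\partial\bV_{i_k}=\zeta_k\bV^*_{i_k}$, the complementary‑slackness conditions $\nu_{i_k}(g_{i_k}+\lambda^*)=0$ and $\zeta_k\big(\sum_{i\in\cI_k}\tr[\bV^*_{i_k}\bV_{i_k}^{*H}]-P_k\big)=0$, and primal feasibility. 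At the candidate point, $\partial g_{i_k}/\partial\bU_{i_k}=0$ because the MMSE receiver $\Psi_{i_k}(\bV)$ is the unconstrained minimizer of $\tr[\bW_{i_k}\bE_{i_k}]$ for every fixed $\bW_{i_k}\succ 0$ (this is exactly the first‑order computation carried out in the proof of Lemma~\ref{Lem:RateWU}), and $\partial g_{i_k}/\partial\bW_{i_k}=\bE_{i_k}-\bW_{i_k}^{-1}$, which vanishes since at $\bU_{i_k}=\Psi_{i_k}(\bV)$ one has $\bE_{i_k}=\bE^{\rm mmse}_{i_k}=\Upsilon_{i_k}(\bV)^{-1}=\bW_{i_k}^{-1}$; hence the $\bU$‑ and $\bW$‑stationarity conditions of (Q1) hold automatically, and moreover $g_{i_k}=-R_{i_k}(\bV^*)$ there, so the $\nu$‑ and $\mu$‑complementary‑slackness conditions coincide.

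The heart of the argument, and the step I expect to demand the most care, is the envelope identity
\[
\nabla_{\bV_{i_k}}R_{m_\ell}(\bV)=-\left.\frac{\partial g_{m_\ell}(\bV,\bU_{m_\ell},\bW_{m_\ell})}{\partial\bV_{i_k}}\right|_{\bU_{m_\ell}=\Psi_{m_\ell}(\bV),\;\bW_{m_\ell}=\Upsilon_{m_\ell}(\bV)},\qquad\forall\,i_k,m_\ell\in\cI,
\]
which I would obtain by differentiating the identity $R_{m_\ell}(\bV)=-g_{m_\ell}(\bV,\Psi_{m_\ell}(\bV),\Upsilon_{m_\ell}(\bV))$ of Lemma~\ref{Lem:RateWU} with respect to $\bV_{i_k}$ and observing that the two chain‑rule terms coming from the $\bV$‑dependence of $\Psi_{m_\ell}$ and $\Upsilon_{m_\ell}$ drop out precisely because $\partial g_{m_\ell}/\partial\bU_{m_\ell}$ and $\partial g_{m_\ell}/\partial\bW_{m_\ell}$ vanish at the MMSE point, as just noted; since $\Psi_{m_\ell}$ and $\Upsilon_{m_\ell}$ are smooth (explicit matrix inverses) on the relevant domain, this differentiation is legitimate. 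The only delicate bookkeeping here is handling the complex/Hermitian matrix‑derivative conventions consistently.

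Granting the envelope identity, the $\bV_{i_k}$‑stationarity of (Q1) becomes $\sum_{m_\ell}\nu_{m_\ell}\nabla_{\bV_{i_k}}R_{m_\ell}(\bV^*)=\zeta_k\bV^*_{i_k}$, so under the substitution $\mu_{i_k}=\nu_{i_k}$, $\eta_k=\zeta_k$ the two KKT systems become identical (primal feasibility also matches, since $g_{i_k}+\lambda^*\le 0\Leftrightarrow R_{i_k}(\bV^*)\ge\lambda^*$ and the power constraints are the same in both problems). This yields both implications of the KKT equivalence: from a KKT point of (P1) one reads off the multipliers $\nu=\mu$, $\zeta=\eta$ certifying that $(\lambda^*,\bV^*,\Psi(\bV^*),\Upsilon(\bV^*))$ is a KKT point of (Q1), and conversely a KKT point of (Q1) of this form yields, via the same identification, a KKT point of (P1). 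Combined with the global‑optimality paragraph, this establishes the proposition.
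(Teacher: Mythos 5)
Your proposal is correct, and while it rests on the same underlying principle as the paper's proof (the envelope/Danskin observation that the chain-rule terms through $\Psi$ and $\Upsilon$ vanish because $\bU$- and $\bW$-stationarity hold at the MMSE point), the execution differs in two substantive ways. For the global-optimality claim, you exploit the decoupling of the pairs $(\bU_{i_k},\bW_{i_k})$ across users to interchange $\min_{\bU,\bW}$ and $\max_{i_k}$ exactly, obtaining the pointwise value-function identity $\min_{\bU,\bW}\max_{i_k}g_{i_k}(\bV,\cdot,\cdot)=-\min_{i_k}R_{i_k}(\bV)$ and reading off the equivalence directly; the paper instead argues by contradiction through KKT points, which forces it to first prove the auxiliary fact that any KKT solution $(\bV^*,\widetilde{\bU},\widetilde{\bW})$ of (Q1) can be replaced by $(\bV^*,\Psi(\bV^*),\Upsilon(\bV^*))$ without changing the objective. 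Your route is shorter and avoids that detour, at the cost of not establishing that (stronger) auxiliary fact, which the proposition as stated does not require. For the KKT claim, the paper partitions users into active and inactive sets $\mathcal{A}$, $\bar{\mathcal{A}}$ and applies Danskin's theorem to the reduced Lagrangian $\widehat{L}_{\mathcal{A}}$, the restriction to $\mathcal{A}$ being needed solely because Danskin requires uniqueness of the inner maximizer and the Lagrangian is independent of $\bU_{i_k},\bW_{i_k}$ when $\mu^*_{i_k}=0$; you instead differentiate the per-user identity $R_{m_\ell}(\bV)=-g_{m_\ell}(\bV,\Psi_{m_\ell}(\bV),\Upsilon_{m_\ell}(\bV))$ through the explicit smooth selections $\Psi_{m_\ell},\Upsilon_{m_\ell}$, which makes the gradient identity hold for every user regardless of its multiplier and renders the active-set bookkeeping unnecessary. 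Both arguments are valid; your per-user formulation arrives at the same matching of the two KKT systems, \eqref{eqKKTVP1}--\eqref{eqKKTComplimentarityP12} versus \eqref{eqKKTVP2}--\eqref{eqKKTComplimentarityP22}, with less machinery, provided the Wirtinger-calculus conventions are applied consistently as you note.
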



The proof of Proposition
\ref{Prop:P1andQ1} is rather involved and thus relegated to the Appendix C.

\section{The Proposed Algorithm}
\label{sec:Algorithm}

In this section, we utilize the above equivalence relationship to
design a simple algorithm for problem (P1). Our strategy is to
compute a stationary solution of problem (Q1) instead.

We propose to update the variables $\bU,\bV$, and $\bW$ alternately.
More specifically, let us use $n$ as the iteration index. Then the
proposed algorithm alternates among the following three steps
\begin{align}
\mathbf{V}^{n+1}&\in\Phi(\mathbf{U}^{n},\mathbf{W}^{n}),\quad\mathbf{U}^{n+1}=\Psi(\mathbf{V}^{n+1}),\quad
\mathbf{W}^{n+1}=\Upsilon(\mathbf{V}^{n+1})\nonumber
\end{align}
where $\Psi(\cdot)$ and $\Upsilon(\cdot)$ are respectively given in
\eqref{eqPsi} and \eqref{eqUpsilon}. The mapping $\Phi(\cdot)$ is
defined as
\begin{align}
\mathbf{V}\in\Phi(\mathbf{U},
\mathbf{W})\quad \Longleftrightarrow\quad \mathbf{V}\in
\arg\min_{\mathbf{V}\in\mathcal{V}}\max_{i_k\in\mathcal{I}}\mbox{Tr}[\mathbf{W}_{i_k}\mathbf{E}_{i_k}]-
\log\det(\mathbf{W}_{i_k})-d_{i_k}.\nonumber
\end{align}
In words, every element in the range of the map
$\Phi(\mathbf{W},\mathbf{U})$ is an optimal solution to the problem
\begin{align}
\min_{\mathbf{V}}&\max_{i_k\in\mathcal{I}}\mbox{Tr}[\mathbf{W}_{i_k}\mathbf{E}_{i_k}]-\log\det(\mathbf{W}_{i_k})-d_{i_k}\tag{Q-V}\\
{\rm s.t.}&\quad \mathbf{V}\in\mathcal{V}.\nonumber
\end{align}

In the following, we will proceed to obtain the solution to the
problem (Q-V). Introducing a slack variable $\gamma$, the problem
(Q-V) can be equivalently written as
\begin{align}
\min_{\mathbf{V},\gamma}&\quad\gamma\nonumber\\
{\rm s.t.}&\quad \mbox{Tr}[\mathbf{W}_{i_k}\mathbf{E}_{i_k}]-\log\det(\mathbf{W}_{i_k})-d_{i_k}\le \gamma,~\forall~i_k\in\mathcal{I}\nonumber\\
&\quad\mathbf{V}\in\mathcal{V}.\nonumber
\end{align}
Utilizing the definition of the MSE matrix in \eqref{EQ:MSE}, we can
see that this problem is a convex problem, as the objective is
linear, and all the constraints are convex (in fact, quadratic).
Thus, this problem can be solved in a centralized way using
conventional optimization package. The overall algorithm is
summarized in Table~\ref{FIG:Algo}.

\begin{thm}\label{Thm:Convergence}
{\it The iterates generated by the alternating algorithm given in Table~\ref{FIG:Algo} converge to the set of KKT
solutions of the problem {\rm (P1)}. In other words,}
\[
\lim_{n \rightarrow \infty} \quad  d(\bV^n,\mathcal{S})=0
\]
{\it where $\mathcal{S}$ is the set of KKT points of {\rm(P1)} and
$d(\bV,\mathcal{S}) \triangleq \inf_{\bU\in \mathcal{S}} \|\bV -
\bU\|$.}
\end{thm}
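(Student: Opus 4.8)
The plan is to cast the alternating scheme as a block successive upper bound minimization (BSUM)-type algorithm applied to problem (Q) and invoke a standard convergence argument for such methods, then transfer the conclusion back to (P1) via Proposition~\ref{Prop:P1andQ1}. The three blocks are $\mathbf{U}$, $\mathbf{W}$, and $\mathbf{V}$, and the common objective is
\[
g(\mathbf{U},\mathbf{V},\mathbf{W})\triangleq\max_{i_k\in\mathcal{I}}\ \mbox{Tr}[\mathbf{W}_{i_k}\mathbf{E}_{i_k}(\mathbf{U}_{i_k},\mathbf{V})]-\log\det(\mathbf{W}_{i_k})-d_{i_k}.
\]
The first step is to record that each block update is an exact minimization of $g$ over that block: the $\mathbf{U}$-update $\mathbf{U}^{n+1}=\Psi(\mathbf{V}^{n+1})$ minimizes $g$ over $\mathbf{U}$ (the MMSE receiver minimizes each $\mathbf{E}_{i_k}$ in the positive-semidefinite order, hence minimizes each summand and the max); the $\mathbf{W}$-update $\mathbf{W}^{n+1}=\Upsilon(\mathbf{V}^{n+1})$ minimizes $g$ over $\mathbf{W}$ (using Lemma~\ref{Lem:RateWU}, $\mathbf{W}_{i_k}=(\mathbf{E}_{i_k})^{-1}$ is the unconstrained minimizer of each term $\mbox{Tr}[\mathbf{W}_{i_k}\mathbf{E}_{i_k}]-\log\det\mathbf{W}_{i_k}$); and the $\mathbf{V}$-update is exactly $\Phi(\mathbf{U}^n,\mathbf{W}^n)$, the global minimizer of the convex problem (Q-V). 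Consequently $g$ is nonincreasing along the iterations, and since $g$ is bounded below on the feasible set (each term is the negative of a rate, which is finite, once we note the relevant matrices stay bounded), $g(\mathbf{U}^n,\mathbf{V}^n,\mathbf{W}^n)$ converges.

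Next I would establish compactness and extract convergent subsequences. The constraint set $\mathcal{V}$ is compact, so $\{\mathbf{V}^n\}$ is bounded; boundedness of $\{\mathbf{U}^n\}$ follows from $\mathbf{U}^{n+1}=\Psi(\mathbf{V}^{n+1})$ and continuity of $\Psi$ on the compact set $\mathcal{V}$, and boundedness of $\{\mathbf{W}^n\}=\{\Upsilon(\mathbf{V}^n)\}$ likewise — here one needs the uniform lower bound $\mathbf{E}^{\rm mmse}_{i_k}(\mathbf{V})\succ 0$ over $\mathbf{V}\in\mathcal{V}$, which holds because $\sigma^2_{i_k}>0$ forces $\mathbf{E}^{\rm mmse}_{i_k}\succeq c\,\mathbf{I}$ for some $c>0$ uniformly on the compact set. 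Along any convergent subsequence $(\mathbf{U}^{n_t},\mathbf{V}^{n_t},\mathbf{W}^{n_t})\to(\mathbf{U}^\star,\mathbf{V}^\star,\mathbf{W}^\star)$ one has $\mathbf{U}^\star=\Psi(\mathbf{V}^\star)$ and $\mathbf{W}^\star=\Upsilon(\mathbf{V}^\star)$ by continuity, and one shows the limit point is a coordinatewise minimizer of $g$ (hence a stationary/KKT point of (Q)): the key is that the successive differences $\|\mathbf{V}^{n+1}-\mathbf{V}^n\|\to 0$, which follows from the monotone decrease of $g$ together with a lower-bounding quadratic-growth / three-point argument for the $\mathbf{V}$-subproblem (the objective of (Q-V) is strongly convex in $\mathbf{V}$ direction-by-direction once $\mathbf{W}\succ 0$, so optimal $\mathbf{V}$ is characterized stably). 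With $\|\mathbf{V}^{n+1}-\mathbf{V}^n\|\to0$, the optimality conditions of the $\mathbf{V}$-update, which hold at $(\mathbf{V}^{n_t+1};\mathbf{U}^{n_t},\mathbf{W}^{n_t})$, pass to the limit and show $\mathbf{V}^\star$ minimizes $g(\Psi(\mathbf{V}^\star),\cdot,\Upsilon(\mathbf{V}^\star))$ over $\mathcal{V}$; combined with the optimality of the $\mathbf{U}$- and $\mathbf{W}$-blocks at the limit, $(\mathbf{U}^\star,\mathbf{V}^\star,\mathbf{W}^\star)$ is a KKT point of (Q1).

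Finally I would invoke Proposition~\ref{Prop:P1andQ1}: since $(\mathbf{V}^\star,\Psi(\mathbf{V}^\star),\Upsilon(\mathbf{V}^\star))$ is a KKT point of (Q1), $\mathbf{V}^\star$ is a KKT point of (P1), i.e.\ $\mathbf{V}^\star\in\mathcal{S}$. To upgrade subsequential convergence to $d(\mathbf{V}^n,\mathcal{S})\to 0$, I would argue by contradiction: if not, there is $\varepsilon>0$ and a subsequence with $d(\mathbf{V}^{n_t},\mathcal{S})\ge\varepsilon$; by compactness pass to a further convergent subsequence whose limit, by the above, lies in $\mathcal{S}$ — contradiction. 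The main obstacle I anticipate is the rigorous treatment of the nonsmooth $\max_{i_k}$ objective in the $\mathbf{V}$-step: establishing $\|\mathbf{V}^{n+1}-\mathbf{V}^n\|\to0$ and then correctly passing the KKT/subdifferential conditions of the nonsmooth convex subproblem (Q-V) to the limit requires care with the active-constraint (active-user) set and with the multipliers $\{\mu_{i_k}\}$ of the epigraph reformulation; one must show these multipliers remain bounded (which follows from a Slater/constraint-qualification argument for (Q-V), available since the feasible set has nonempty interior relative to the power constraints) so that they too have convergent subsequences whose limits certify stationarity of $\mathbf{V}^\star$.
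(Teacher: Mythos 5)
Your overall architecture matches the paper's: monotone decrease of the common objective $G$, compactness of $\mathcal{V}$, extraction of a convergent subsequence, continuity of $\Psi$ and $\Upsilon$ forcing $\bar\bU=\Psi(\bar\bV)$ and $\bar\bW=\Upsilon(\bar\bV)$, transfer to (P1) via Proposition~\ref{Prop:P1andQ1}, and the final contradiction argument upgrading subsequential convergence to $d(\bV^n,\mathcal{S})\to 0$. However, there is a genuine gap at the crux of your argument: you claim that the objective of (Q-V) is ``strongly convex in $\mathbf{V}$ direction-by-direction once $\mathbf{W}\succ 0$,'' and you use this to deduce $\|\bV^{n+1}-\bV^n\|\to 0$ and then to pass the first-order conditions of the $\bV$-subproblem to the limit. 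This strong convexity claim is false in general. The quadratic form in $\bV_{m_\ell}$ appearing in $\trace[\bW_{i_k}\bE_{i_k}]$ is governed by the matrix $\bH_{i_k\ell}^H\bU_{i_k}\bW_{i_k}\bU_{i_k}^H\bH_{i_k\ell}$, which has rank at most $d_{i_k}$ and is therefore singular whenever $d_{i_k}<M_\ell$ (the typical case); moreover the objective is a pointwise maximum in which a given $\bV_{m_\ell}$ may not even appear in the active term. Consequently (Q-V) can have multiple minimizers, the three-point/sufficient-decrease inequality is unavailable, and $\|\bV^{n+1}-\bV^n\|\to 0$ cannot be concluded. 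This is precisely the obstruction the paper flags in Remark~2: the conventional BCD/BSUM convergence theory does not apply here because the block subproblems do not have unique solutions.

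The paper circumvents this with a value-based sandwich argument that never requires successive iterates to approach each other: since $\bV^{n_t+1}$ is a \emph{global} minimizer of the convex subproblem, one has
\begin{align}
G(\bV^{n_{t+1}},\bU^{n_{t+1}},\bW^{n_{t+1}})\;\le\; G(\bV^{n_t+1},\bU^{n_t},\bW^{n_t})\;\le\; G(\bV,\bU^{n_t},\bW^{n_t}),\qquad\forall\,\bV\in\mathcal{V},\nonumber
\end{align}
and letting $t\to\infty$ yields $G(\bar\bV,\bar\bU,\bar\bW)\le G(\bV,\bar\bU,\bar\bW)$ for all $\bV\in\mathcal{V}$, i.e.\ $\bar\bV\in\Phi(\bar\bW,\bar\bU)$ directly, after which the KKT multipliers for the limiting convex problem are assembled with those coming from $\bar\bU=\Psi(\bar\bV)$, $\bar\bW=\Upsilon(\bar\bV)$. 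You could repair your proof either by adopting this sandwich argument, or by extracting a further subsequence along which $\bV^{n_t+1}\to\widehat\bV$ and noting that $G(\widehat\bV,\bar\bU,\bar\bW)=\bar G=G(\bar\bV,\bar\bU,\bar\bW)$, so $\bar\bV$ attains the same (minimal) subproblem value as $\widehat\bV$ and is therefore also a minimizer; but as written, the step resting on strong convexity does not hold.
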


{\it Proof:}
Let us define the value of the objective function of problem (Q) as
$G(\mathbf{V}, \mathbf{U}, \mathbf{W})$.  Due to equivalence,
$G(\mathbf{V}, \mathbf{U}, \mathbf{W})$ can also represent the value
of the objective function of problem (Q1). First, we observe that the sequence $\{G(\bV^{n},
\bU^{n}, \bW^{n})\}_{n=1}^{\infty}$ monotonically decreases and
converges. Denote its limit as $\bar{G}$.  Due to the compactness of the set $\mathcal{V}$, the iterates $\{\bV^n\}_{n=1}^{\infty}$ must have a cluster point $\bar\bV$. Let $\{\bV^{n_t}\}_{t=1}^{\infty}$ be the subsequence converging to $\bar\bV$. Since the maps $\Upsilon(\cdot)$ and $\Psi(\cdot)$ are continuous, we must have
\[
\lim_{t \rightarrow \infty} \quad (\bV^{n_t},\bU^{n_t},\bW^{n_t}) =(\bar\bV,\bar\bU,\bar\bW)  \triangleq (\bar\bV,\Psi(\bar\bV),\Upsilon(\bar\bV))
\]

First we will show that in the limit we have: $
\bar{\bV}\in\Phi(\bar{\bW},\bar{\bU}).\nonumber$ Due to the
optimality of $\bV^{n_t+1}$ and monotonic decrease of the objective function, we have that
\begin{align}
G(\bV^{n_{t+1}},\bU^{n_{t+1}},\bW^{n_{t+1}})\leq G(\bV^{n_t+1}, \bU^{n_t}, \bW^{n_t})\le G(\bV, \bU^{n_t},
\bW^{n_t}), \quad\forall~\bV\in\mathcal{V}, ~\forall~n_t.\nonumber
\end{align}
Taking the limit of both sides, we have that\footnote{Note that
taking the limit inside the objective value $G(\cdot)$ is
legitimate, as the objective function of problem
\eqref{eqMaxMinReformulate} is continuous (albeit nonsmooth).}
\begin{align}
\bar{G} = G(\bar\bV,\bar\bU,\bar\bW)\le G(\bV, \bar{\bU}, \bar{\bW}),
\quad\forall~\bV\in\mathcal{V}.\nonumber
\end{align}
Consequently, we must have $\bar{\bV}\in\Phi(\bar{\bW},
\bar{\bU})$.

The next step is to establish that $(\bar{\bV}, \bar{\bU},
\bar{\bW})=(\bar{\bV}, \Psi(\bar{\bV}), \Upsilon(\bar{\bV}))$ is a
KKT solution of (Q1). Firstly, the fact that
$\bar{\bV}\in\Phi(\bar{\bW}, \bar{\bU})$ implies that $\bar{\bV}$ is
a global optimal solution of the following convex problem
\begin{align}
&\min_{\mathbf{V},\lambda}\lambda\nonumber\\
{\rm s.t.}\;~&
\mbox{Tr}[\bar{\mathbf{W}}_{i_k}\mathbf{E}_{i_k}(\bar{\bU}_{i_k},\bV)]
-\log\det(\bar{\mathbf{W}}_{i_k})-d_{i_k}\le \lambda,~\forall~i_k\in\mathcal{I}\nonumber\\
~&\mathbf{V}\in\mathcal{V}.\nonumber
\end{align}
Consequently ($\bar{\bV}$,$\bar{\lambda})$ must satisfy the
following optimality conditions (where $\bar{\bfmu},
\bar{\bfepsilon}$ are the associated Lagrangian multipliers) {\small
\begin{align}
-\sum_{i_k\in\mathcal{I}}\bar{\mu}_{i_k}\nabla_{\bV_{m_\ell}}
\left(\trace[\bar{\bW}_{i_k}\bE_{i_k}(\bar{\bU}_{i_k},\bar{\bV})]\right)-
2\bar{\epsilon}_j\bar{\bV}_{\ell_j}&=0, \forall~m_\ell\in\mathcal{I}\nonumber\\
\sum_{i_k\in\mathcal{I}}\bar{\mu}_{i_k}&=1\nonumber\\
0\le\bar{\mu}_{i_k}\perp
-\mbox{Tr}[\bar{\bW}_{i_k}\bE(\bar{\bU}_{i_k},\bar{\bV})_{i_k}]+
\log\det(\bar{\mathbf{W}}_{i_k})+d_{i_k}-\bar{\lambda}&\ge 0, \forall~i_k\in\mathcal{I}\nonumber\\
0\le\bar{\epsilon}_{k}\perp
P_k-\sum_{i_k\in\mathcal{I}_k}\trace[\bar{\bV}_{i_k}\bar{\bV}^H_{i_k}]&\ge
0, \forall~k\in\mathcal{K}\nonumber.
\end{align}}
\hspace{-0.1cm}Similarly, using the fact that
$\bar{\bU}=\Psi(\bar{\bV})$ and $\bar{\bW}=\Upsilon(\bar{\bV})$, we
have that $\bar{\bU}$ and $\bar{\bW}$ must satisfy{\small
\begin{align}
\nabla_{\bU_{i_k}}
\left(\trace[\bar{\bW}_{i_k}\bE_{i_k}(\bar{\bU}_{i_k},\bar{\bV})]\right)&=0,
\forall~i_k\in\mathcal{I}\nonumber\\
\nabla_{{\bW}_{i_k}}\left(
\trace[\bar{\bW}_{i_k}\bE_{i_k}(\bar{\bU}_{i_k},\bar{\bV})]-\log\det(\bar{\bW}_{i_k})\right)&=0,
~\forall~i_k\in\mathcal{I}\nonumber
\end{align}}
which in turn implies that the following conditions are true
\begin{align}
-\bar{\mu}_{i_k}\nabla_{\bU_{i_k}}
\left(\trace[\bar{\bW}_{i_k}\bE_{i_k}(\bar{\bU}_{i_k},\bar{\bV})]\right)&=0,
\forall~i_k\in\mathcal{I}\nonumber\\
-\bar{\mu}_{i_k}\nabla_{{\bW}_{i_k}}\left(
\trace[\bar{\bW}_{i_k}\bE_{i_k}(\bar{\bU}_{i_k},\bar{\bV})]-\log\det(\bar{\bW}_{i_k})\right)&=0,
~\forall~i_k\in\mathcal{I}\nonumber.
\end{align}
In conclusion, we have that $(\bar{\bV}, \bar{\bU},
\bar{\bW})=(\bar{\bV}, \Psi(\bar{\bV}), \Upsilon(\bar{\bV}))$ along
with the slack variable $\bar{\lambda}$ and the multipliers
$(\bar{\bfmu},\bar{\bfepsilon})$ satisfy the KKT condition for
problem (Q1) (as expressed in
\eqref{eqKKTVP2}--\eqref{eqKKTComplimentarityP22} in the Appendix C).
This result implies that $(\bar{\bV}, \bar{\bU},
\bar{\bW})=(\bar{\bV}, \Psi(\bar{\bV}), \Upsilon(\bar{\bV}))$ is a
KKT solution to problem (Q1). Applying the result in Proposition~\ref{Prop:P1andQ1}, we
conclude that $\bar\bV$ must be a KKT point of the
original problem (P1).\\
So far we have proved that any cluster point of the iterates is a
KKT point of (P1). Since the feasible set $\mathcal{V}$ is compact,
we have $ \lim_{n\rightarrow \infty} \quad d(\bV^n,\mathcal{S}) =
0$, and this completes the proof. \hfill$\blacksquare$

Several remarks regarding 
the above results and the existing results in
\cite{WMMSETSP} are in order.

{\bf Remark 1}: The original max-min problem
\eqref{EQ:MaxMin_Original} has {\it nonsmooth} objective functions.
Consequently the proof of the equivalence relationship of problem
(P) and (Q) (i.e., Proposition
\ref{Prop:P1andQ1}) is very different from
the cases presented in \cite{WMMSETSP}. In particular, in the
reformulated problem (Q1), fixing the variable $\bV$ and solving for
variables $\bU$ and $\bW$ generally admits {\it multiple solutions}.
This is because at optimality, it is possible that not all the
constraints on $\bU$ and $\bW$ variables in (Q1) are {\it active}.
For those constraints that are inactive, their corresponding
$\bU_{i_k}$ and $\bW_{i_k}$ can take multiple values.

The possibility of the existence of multiple solutions for problem
(Q1) when fixing $\bV$ has the following consequence: {\it a)} The
proof of the equivalence relationship of the stationary solutions
becomes much involved (see the proof of Proposition
\ref{Prop:P1andQ1}); {\it b)} There is no longer an
one to one relationship between the stationary solutions of problem
(P1) and (Q1). Instead, one stationary solution of problem (P1) may
correspond to a set of stationary solutions of problem (Q1).

{\bf Remark 2}: The proof of convergence of the alternating directions
algorithm becomes more involved. Different from that of
\cite{WMMSETSP}, the conventional
convergence analysis for the block coordinate descent (BCD) algorithm no
longer applies in this context. This is because the proof for the
conventional BCD algorithm requires that at least $2$ of the $3$ subproblems 
involving block variables 
must have {\it unique} solutions (see, e.g., \cite{bertsekas99}), which is clearly not the case here. 
Furthermore, the BCD algorithm requires
that the objective function is continuously differentiable {and} the
constraints are {\it separable} among the block variables. However,
in our case the constraints of problem (Q1) are {\it coupled} among
different block variables .

\section{Simulation Results}
\label{sec:Simulations} In this section, we present some numerical
experiments comparing four different approaches for the beamformer
design in the interfering broadcast channel. The first approach for
designing the beamformers is the simple ``\textit{WMMSE}" algorithm
proposed in~\cite{WMMSETSP} for maximizing the weighted sum rate of
the system. Since the sum rate utility function is not a fair
utility function among the users, we also consider the proportional
fairness (geometric mean) utility function of the users. We use the
framework in~\cite{WMMSETSP},\cite{GroupingRazaviyayn} for maximizing the geometric mean utility
function of the system and the resulting plots are denoted by the
label~``\textit{GWMMSE}".

Another way of designing the beamformers for maximizing the
performance of the worst user in the system is to approximate the
max-min utility function. One proposed approximation for the max-min
utility function could be (see  \cite{Polak03}): $ \min_{i_k}R_{i_k}
\approx \log \left(\sum_{i_k \in \cI} \exp (-R_{i_k})\right)$.
Therefore instead of solving problem~\eqref{EQ:MaxMin_Original}, we
may maximize the above approximation of the objective by solving the
following optimization problem
\begin{equation} \label{EQ:MaxMin_Approx}
\begin{split}
\max_{\bV} \quad & \sum_{i_k \in \cI} \exp (-R_{i_k})\\
\st \quad &\sum_{i=1}^{I_k} \tr(\bV_{i_k}\bV_{i_k}^H)\leq P_k, \quad \forall\; k \in \cK.
\end{split}
\end{equation}
If we restrict ourselves to the case of~$d_{i_k} = 1, \forall\; i_k
\in \cI$, then~$\bE_{i_k}$ in~\eqref{EQ:MSE} becomes a scalar and
thus we can denote it by~$e_{i_k}$. Using the
relation~\eqref{EQ:RateWU} and plugging in the optimal value for the
matrix~$\bW_{i_k}$ yields $ R_{i_k} = \log (e_{i_k}^{-1})\nonumber
$. Plugging in this relation in~\eqref{EQ:MaxMin_Approx}, we obtain
the equivalent optimization form of~\eqref{EQ:MaxMin_Approx}:
\begin{equation} \label{EQ:MaxMin_MMSE}
\begin{split}
\min_{\bV} \quad & \sum_{i_k \in \cI} e_{i_k}\\
\st \quad &\sum_{i=1}^{I_k} \tr(\bV_{i_k}\bV_{i_k}^H)\leq P_k, \quad \forall\; k,
\end{split}
\end{equation}
which is the well-known sum MSE minimization problem and we use the
algorithm in~\cite{Serbetli04} to solve~\eqref{EQ:MaxMin_MMSE}. The
corresponding plots of this method are labeled by ~``\textit{MMSE}"
in our figures.

In our simulations, the first four plots are averaged over 50
channel realizations. In each channel realization, the channel
coefficients are drawn from the zero mean unit variance i.i.d.
Gaussian distribution.

In the first numerical experiment, we consider $K=4$ BSs, each
equipped with $M=6$ antennas. There are $I = 3$ users in each cell
where each of them is equipped with $N=2$ antennas.
Figure~\ref{FIG:K=4CDF} and Figure~\ref{FIG:K=4MinRate} respectively
represent the rate cumulative rate distribution function and the minimum
rate in the system. The SNR level is set to~$20$dB in
Figure~\ref{FIG:K=4CDF}. As these figures show, our proposed method
yields substantially more fair rate allocation in the system.

In our second set of numerical experiments in Figure~\ref{FIG:K=5CDF} and Figure~\ref{FIG:K=5MinRate}, we explore the system
with $K=5$ cells where each BS serves~$I = 3$ users. The number of
transmit and receive antennas are respectively~$M = 3$ and $N=2$.


Figure~\ref{FIG:Addonef} and Figure~\ref{FIG:AddoneR} show the convergence rate of the algorithm  while a user is joining the system. In these plots, there are 5 cells and 2 users in each cell initially and at iteration 4, another user is added to one of the cells. When the extra user is added to the system, the power for the users in the same cell is reduced by a factor of $\frac{2}{3}$ and the rest of the power is used to serve the joined user initially. The precoder of the joined user is initialized randomly. Figure~\ref{FIG:Addonef} shows the objective function of (Q) during the iterations while Figure~\ref{FIG:AddoneR} demonstrates the minimum rate of the users in the system versus the iteration number.\\

Figure~\ref{FIG:ChangeChannelf} and Figure~\ref{FIG:ChangeChannelR} represent the performance and the convergence rate of the algorithm when the channel is changing during the iterations. At iteration~7, the channel is changed by a Rayleigh fade with power 0.1. As it can be seen from the plots, the algorithm converges fast and it adapts to the new channel after a few iterations.

%

\section{Appendix}\label{sec:Appendix}

\subsection{Proof of Lemma \ref{Lemma3User}}
First of all, it can be observed that choosing $\bQ_1= \bQ_2 = \bQ_3 = \bQ_a^*$ yields an objective value of $\lambda^* = 1$; the same result holds for the case of $\bQ_1= \bQ_2 = \bQ_3 = \bQ_b^*$, $\bQ_1= \bQ_2 = \bQ_3 = \bQ_c^*$, and $\bQ_1= \bQ_2 = \bQ_3 = \bQ_d^*$. 

Let $(\lambda, \bQ_1,\bQ_2,\bQ_3) \in \mathcal{S}$ be an optimal
solution. Clearly, at least one  of the users must transmit with full power, for
otherwise we could simultaneously scale $(\bQ_1,\bQ_2,\bQ_3)$ to get
a better objective function. Without loss of generality, let us
assume that user $1$ is transmitting with full power, i.e.,
$\tr(\bQ_1)=1$. Using eigenvalue decomposition of $\bQ_1$, we can
write $\bQ_1 = \alpha \ba \ba^H + \beta \bb \bb^H,$ where $\ba$ and
$\bb$ are the orthonormal eigenvectors of $\bQ_1$ and the scalars
$\alpha,\beta\geq 0$ are the eigenvalues of $\bQ_1$ with $\alpha +
\beta = 1$. Since canceling the interference results in higher rate
of communication, we have{\small
\begin{align}
R_2 &= \log \det\left(\bI + \bQ_2  \left(\bI + \sum_{m\neq 2} \bH_{2m} \bQ_m \bH_{2m}^H\right)^{-1}\right)\nonumber\\
&\leq \log \det \left(\bI + \bQ_2  \left(\bI +  \bH_{21} (\alpha \ba \ba^H + \beta \bb \bb^H) \bH_{21}^H\right)^{-1}\right)\nonumber\\
&= \log \det \left(\bI + \bQ_2  \left(\bI +  4\alpha\ \underline{\ba} \ \underline{\ba}^H  + 4 \beta\ \underline{\bb} \ \underline{\bb}^H \right)^{-1}\right)\nonumber\\
&= \log \det \left(\bI + \bQ_2  \left( \frac{1}{1+ 4\alpha}\ \underline{\ba}\ \underline{\ba}^H  + \frac{1}{1+ 4 \beta}\ \underline{\bb}\ \underline{\bb}^H \right) \right)\nonumber\\
&\leq \log \det \left(\bI + \frac{1}{\tr(\bQ_2)}\bQ_2  \left(
\frac{1}{1+ 4\alpha}\ \underline{\ba}\ \underline{\ba}^H  +
\frac{1}{1+ 4 \beta}\ \underline{\bb}\ \underline{\bb}^H \right)
\right), \label{EQ:temp1Lemma}
\end{align}}
\hspace{-0.2cm}where $\underline{\ba} = \frac{1}{2} \bH_{21} \ba$ and $\underline{\bb} = \frac{1}{2} \bH_{21} \bb$. The last inequality is due to the fact that $\tr(\bQ_2) \leq 1$. Clearly, $\underline{\ba}^H \underline{\bb} = 0$ and $\|\underline{\ba}\| = \|\underline{\bb}\| = 1$.\\

Let us use the eigenvalue decomposition $\frac{\bQ_2}{\tr(\bQ_2)} =
\theta \bc\bc^H + (1-\theta)\bd\bd^H$, for some $\theta\in [0,1]$ and some orthonormal vectors $\bc$ and $\bd$. Utilizing the fact that
determinant is the product of the eigenvalues and trace is the sum
of the eigenvalues, we can further simplify the inequality 
in~\eqref{EQ:temp1Lemma} as 
{\small
\begin{align}
R_2 &\leq \log \bigg\{1+ \tr\left[\left(\theta \bc \bc^H + (1-\theta)\bd\bd^H \right) \left(\frac{1}{1 + 4 \alpha} \underline\ba \;\underline\ba^H + \frac{1}{1 + 4 \beta} \underline\bb\; \underline\bb^H\right)\right] \nonumber\\
&\quad\quad \quad \quad + \det \left[\left(\theta \bc\bc^H + (1-\theta) \bd\bd^H\right)\left(\frac{1}{1 + 4 \alpha} \underline\ba\;\underline\ba^H + \frac{1}{1 + 4 \beta} \underline\bb\; \underline\bb^H\right)\right]\bigg\} \nonumber\\
&= \log \bigg[ 1 + \frac{\theta x}{1 + 4\alpha}+\frac{\theta (1-x)}{1 + 4\beta} + \frac{(1-\theta)(1-x)}{ 1 + 4\alpha} + \frac{(1-\theta)x}{1 + 4\beta} + \frac{\theta (1-\theta)}{(1 + 4\alpha)(1+4\beta)}\bigg]\nonumber\\
&\leq \max_{(x,\theta,\alpha,\beta) \in \mathcal{Y}} \; \log \bigg[
1 + \frac{\theta x}{1 + 4\alpha}\frac{\theta (1-x)}{1 + 4\beta} +
\frac{(1-\theta)(1-x)}{ 1 + 4\alpha} + \frac{(1-\theta)x}{1 +
4\beta} + \frac{\theta (1-\theta)}{(1 + 4\alpha)(1+4\beta)}\bigg],
\label{EQ:Temp2halflemma}
\end{align}}
\hspace{-0.1cm}where $x \triangleq |\bc^H \ba|^2$,  $\mathcal{Y}
\triangleq \{(x,\theta,\alpha,\beta)\mid \alpha+\beta = 1, \;0\leq
\alpha,\beta,x\leq 1\}$. 
Since the function in \eqref{EQ:Temp2halflemma} is linear in $x$, it
suffices to only check the boundary points $x = 0$ and $x=1$ in
order to find the maximum. The claim is that the maximum in
\eqref{EQ:Temp2halflemma} takes the value of $1$, and it is achieved
at both boundary points.

First consider the boundary point $x = 1$. We have
\begin{align}
R_2 \leq \max_{(\theta,\alpha,\beta) \in \mathcal{X}} \quad
f(\theta,\alpha,\beta),
\end{align}
where $\mathcal{X} \triangleq \{(\theta,\alpha,\beta)\mid \alpha+\beta
= 1, \;0\leq \alpha,\beta\}$ and
\begin{align}
f(\theta,\alpha,\beta) \triangleq
\log\left(1+\frac{\theta}{1+4\alpha} + \frac{1-\theta}{1+4\beta} +
\frac{\theta(1-\theta)}{(1+4\alpha)(1+4\beta)}\right)
\label{EQ:temp2Lemma}
\end{align} We are interested in finding the
set of optimal solutions of \eqref{EQ:temp2Lemma}. In particular, we
want to characterize $\mathcal{S}_1 =
\{(\theta^*,\alpha^*,\beta^*)\}$ defined by
\[
\mathcal{S}_1 \triangleq \arg \max_{(\theta,\alpha,\beta) \in
\mathcal{X}} \;\; f(\theta,\alpha,\beta).
\]
In what follows, we will prove that $ \mathcal{S}_1 =
\{(0,1,0),(1,0,1)\}.$

First we observe that $f(0,1,0) = f(1,0,1) = 1$. Now, we show that
$f(\theta,\alpha,\beta)<1, $ for all $(\theta,\alpha,\beta) \in
\mathcal{X}$ such that $0<\theta<1$. Assume the contrary that there
exists an optimal point $(\theta^*, \alpha^*, \beta^*)$ such that $ 0< \theta^* <1$.
Using the first order optimality condition $\frac{\partial
}{\partial \theta}f(\theta^*,\alpha^*,\beta^*) =0$, we obtain
\[
\theta^* = \frac{4\beta^* -4\alpha^* +1}{2}.
\]
Combining with $0 < \theta^* < 1$ yields
\begin{align}
-\frac{1}{4}< \beta^* -\alpha^* < \frac{1}{4}. \label{EQ:temp3Lemma}
\end{align}
Plugging in the value of optimal $\theta^* = \frac{4\beta^*
-4\alpha^* +1}{2}$ in $f(\cdot)$ and simplifying the equations, we
obtain
\[
f(\theta^*,\alpha^*,\beta^*) = \log \left(1+\frac{13 + 16(\beta^* -
\alpha^*)^2}{4(1+4\alpha^*)(1+4\beta^*)}\right).
\]
Combining with \eqref{EQ:temp3Lemma} yields{\small
\begin{align}
f(\theta^*,\alpha^*,\beta^*) &\leq \log\left(1 + \frac{14}{4(1+4\alpha^*)(1+4\beta^*)}\right)\nonumber\\
&\leq \log \left(1 + \frac{14}{4(1 + 4\alpha^* + 4\beta^*)}\right)\nonumber\\
& = \log\left(1 + \frac{14}{20}\right) < 1, \nonumber
\end{align}}
{\hspace{-0.2cm}}which contradicts the fact that
$\max_{(\theta,\alpha,\beta) \in \mathcal{X}} f(\theta,\alpha,\beta)
=1$. Therefore, the optimal $\theta$ only happens at the boundary
and we have
\[
\{(0,1,0),(1,0,1)\} = \arg\max_{(\theta, \alpha,\beta)\in
\mathcal{X}}\; f(\theta, \alpha,\beta).
\]

Similarly, for the case when $x = 0$, we can see that the optimal
solution set is $\{(0,0,1),(1,1,0)\}$.

Using these optimal values yields $R_2 \leq 1.$ Note that in order
to have equality $R_2 = 1$, we must have $\tr(\bQ_2) =1$ and
\[
(x,\theta,\alpha,\beta) \in
\{(1,0,1,0),(1,1,0,1),(0,0,0,1),(0,1,1,0)\}.
\]
Let us choose the optimal solution $(x,\theta,\alpha,\beta) =
(1,0,1,0)$. Therefore,
\[
\bQ_1 = \ba\ba^H,\quad \bQ_2 = \bd\bd^H, \quad x = |\bc^H\ba|^2 = 1,
\]
which yields $\underline\ba^H \bd=0$. Repeating the above argument
for user 2 and user 3, we get $\bQ_3 = \bg\bg^H$ with
$\underline\ba^H \bg= 0$. Since $\bd$ and $\bg$ are both orthogonal
to $\underline\ba$, we obtain $ \bd = \exp^{j\phi_d} \bg.$ Repeating
the above argument for the other pair of users yields
\[
\ba = \exp^{j\phi_a} \bg \quad {\rm and}\quad \underline\ba^H \ba =
0,
\]
where the last relations imply that $\ba,\bd$, and $\bg$ are the
same up to the phase rotation and they belong to the following set
(after the proper phase rotation)
\[
\ba \in \left\{ [1 \quad 0]^H,[0 \quad 1]^H,\frac{1}{\sqrt{2}}[j \quad
1]^H,\frac{1}{\sqrt{2}}[1\quad  j]^H\right\}.
\]
Each of these points gives us one of the optimal covariance matrices
in \eqref{EQ:temp0lemma}.

\subsection{Proof of Theorem \ref{THM:NPhardMaxMin}}
\begin{proof}
The proof is based on a polynomial time reduction from  the
3-satisfiability (3-SAT) problem which is known to be NP-complete.
We first consider an instance of the 3-SAT problem with $n$
variables $x_1,x_2,\ldots,x_n$ and $m$ clauses $c_1,c_2,\ldots,c_m$.
For each variable $x_i$, we consider 5 users $\mathcal{X}_{1i},
\mathcal{X}_{2i},\ldots, \mathcal{X}_{5i}$ in our interference
channel. Each user is equipped with two antennas, and the channels
between the users are specified as in
\eqref{EQ:Channel5User1}--\eqref{EQ:Channel5User3}. For each clause
$c_j$, $j=1,2,\ldots, m$, we consider one user $\mathcal{C}_j$ in
the system with two antennas. In summary, we totally have $5n+m$
users in the system. Set the noise power $\sigma^2 = 1$ and the
power budget $P_k =1$ for all users. We define the channel between
the users $\mathcal{C}_i$ and $\mathcal{C}_j$ to be zero for all
$j\neq i$. Furthermore, we assume that the channel between the transmitter and receiver of user $\mathcal{C}_i$ is given by
\[
\bH_{\mathcal{C}_i\mathcal{C}_i} = \left[
\begin{array}{cc}
1 & 0\\
0 & 0\\
\end{array}
\right]
\]
Let us also assume that {\it i)} there is no interference
among the blocks of users that correspond to different variables and
{\it ii)} there is no interference from the transmitter of user
$\mathcal{C}_j$ to the receivers of users $\mathcal{X}_{1i},\ldots,
\mathcal{X}_{5i}$ for all $i=1,2,\ldots, n$; $j = 1,2,\ldots, m$.
Consider a clause $c_j: y_{j1} + y_{j2} + y_{j3}$, where $y_{j1},
y_{j2}, y_{j3} \in
\{x_1,x_2,\ldots,x_n,\overline{x}_1,\overline{x}_2,
\ldots,\overline{x}_n\}$  with $\overline{x}_i$ denoting the negation of
$x_i$. We use the following rules to define the channels from the
transmitter of user $\mathcal{X}_{ki}$ to the receiver of user
$\mathcal{C}_j$:
\begin{itemize}
\item If the variable $x_i$ appears in $c_j$, we define the channel from the transmitter of $\mathcal{X}_{1i}$ to
the receiver of $\mathcal{C}_j$ to be {\small$
\left[\begin{array}{cc}
1 & 0\\
0 & 0\\
\end{array}\right]$}.
\item If the variable $\bar{x}_i$ appears in $c_j$, we define the channel from the
transmitter of $\mathcal{X}_{1i}$ to the receiver of $\mathcal{C}_j$
to be {\small$  \left[\begin{array}{cc}
0 & 1\\
0 & 0\\
\end{array}\right]$}.
\item If $x_i$ does not appear in $c_j$, we define the channel from the transmitter of $\mathcal{X}_{1i}$ to the receiver of $\mathcal{C}_j$ to be zero.
\item The channel from transmitters of users $\mathcal{X}_{2i},\mathcal{X}_{3i},\mathcal{X}_{4i},\mathcal{X}_{5i}$ to the receiver of user $\mathcal{C}_{j}$ is zero for all $i=1,\ldots,n$ and $j = 1,2,\ldots,m$.
\end{itemize}
\textbf{[[[As an example, draw a figure for a simple clause.]]]}
Now we claim that the 3-SAT problem is satisfiable if and
only if solving the problem \eqref{EQ:thmNPhard} for the corresponding interference channel leads to the optimum value of one. To prove this fact, let us assume that the optimum value of \eqref{EQ:thmNPhard} is one. According to the Lemma~\ref{Lemma5User}, the only way to get the rate of one for users $\mathcal{X}_{kj}$, $k=1,\ldots,5$, $j = 1,\ldots, n$, is to transmit with full power either on the first antenna or on the second antenna. Now, based on the optimal solution of \eqref{EQ:thmNPhard}, we can determine the solution of the 3-SAT problem. In particular, if user $\mathcal{X}_{1i}$ is transmitting on the first antenna, we set $x_i = 0$. Otherwise, if it transmits on the second antenna, we set $x_i = 1$. By assigning values to all the variables in this way, we claim that all clauses are satisfied. We prove by contradiction. Assume the contrary that there exists a clause $c_j$ that is not satisfied, i.e., all the corresponding variables are zero. Therefore, user $\mathcal{C}_j$ gets interference on the first receive antenna from all three users corresponding to the variables appearing in $\mathcal{C}_j$. As the result, the interference power is 3. Since the noise power is one and the received signal power is 3, the SINR level for user $\mathcal{C}_j$ is $\frac{3}{1+3}$ which contradicts the fact that the minimum rate in the system is one. \\

Now we prove the other direction. Let us assume that the 3-SAT
problem is satisfiable. We claim that the optimal value of
\eqref{EQ:thmNPhard} is one. Since in each block of 5 users the
optimum value is one, it suffices to show that the objective value
of one is achievable. Now, we design the covariance matrices based
on the solution of the 3-SAT problem. If $x_i = 0$, we transmit with
full power on the first antenna of users
$\mathcal{X}_{1i},\mathcal{X}_{2i},\ldots, \mathcal{X}_{5i}$. If
$x_i = 1$, we allocate full power for transmission on the second
antenna of users~$\mathcal{X}_{1i},\mathcal{X}_{2i},\ldots,
\mathcal{X}_{5i}$. With this allocation, each user 
$\mathcal{X}_{ki}$, $k=1,\ldots,5$, $i = 1,\ldots,n$, gets the rate
of one. For all users $\mathcal{C}_j$, $j=1,2,\ldots,m$, we transmit
with full power on the first antenna. Since 3-SAT problem is
satisfiable with the given boolean allocation of the variables, for
each clause $\mathcal{C}_j$ at one of the corresponding variables
are one. Therefore, the interference level at the receiver of user
$\mathcal{C}_j$ is at most 2. Since the received signal power at the
receiver of user $\mathcal{C}_j$ is one, the SINR level is at least
$\frac{3}{1+2}=1$ which yields the rate of communication
$R_{\mathcal{C}_j} \geq 1$. Thus, all users $\mathcal{C}_j,
j=1,\ldots, m$, have rate at least one; which completes the proof of
our claim. As the result, checking whether the objective value of one is achievable for \eqref{EQ:thmNPhard} is equivalent to solving the instance of 3-SAT
problem.  Thus, problem \eqref{EQ:thmNPhard} is NP-hard.
%
\end{proof}

\subsection{Proof of Proposition \ref{Prop:P1andQ1}}
\begin{proof}
The Lagrangian of  problem (P1) can be expressed as{\small
\begin{align}
L(\bV, \lambda;
\bfmu,\bfepsilon)=\lambda+\sum_{i_k\in\mathcal{I}}\mu_{i_k}\left(R_{i_k}(\bV)-\lambda\right)+\sum_{k\in\mathcal{K}}\epsilon_k
(P_k-\sum_{i_k\in\mathcal{I}_k}\trace[\bV_{i_k}\bV^H_{i_k}])
\end{align}}
\hspace{-0.1cm}where
$\bfmu\triangleq\{\mu_{i_k}\}_{i_k\in\mathcal{I}}$ and
$\bfepsilon\triangleq\{\epsilon_{k}\}_{k\in\mathcal{K}}$ are the set
of associated optimal Lagrangian multipliers. Suppose
$(\lambda^*,\bV^*)$ is a KKT point of (P1), and
$\{\mu^*_{i_k}\}_{i_k\in\mathcal{I}}$ and
$\{\epsilon^*_{k}\}_{k\in\mathcal{K}}$ are the set of associated
optimal Lagrangian multipliers. The KKT optimality condition for
problem (P1) can be written as{\small
\begin{align}
\nabla_{\bV_{\ell_j}}L(\bV^*, \lambda^*;
\bfmu^*,\bfepsilon^*)=\sum_{i_k\in\mathcal{I}}\mu^*_{i_k}\nabla_{\bV_{\ell_j}}R_{i_k}(\bV^*)-
2\epsilon^*_j\bV^*_{\ell_j}&=\mathbf{0}, \forall~\ell_j\in\mathcal{I}_k\label{eqKKTVP1}\\
\sum_{i_k\in\mathcal{I}}\mu^*_{i_k}&=1\label{eqKKTLambdaP1}\\
0\le\mu^*_{i_k}\perp R_{i_k}(\bV^*)-\lambda^*&\ge 0, \forall~i_k\in\mathcal{I}\label{eqKKTComplimentarityP11}\\
0\le\epsilon^*_{k}\perp
P_k-\sum_{i_k\in\mathcal{I}_k}\trace[\bV^*_{i_k}(\bV^*_{i_k})^H]&\ge
0, \forall~k\in\mathcal{K}\label{eqKKTComplimentarityP12}.
\end{align}}
\hspace{-0.1cm}Similarly, the Lagrangian of problem (Q1) can be
expressed as
\begin{align}
\widehat{L}(\bV, \bU,\bW, \lambda; \bfmu,
\bfepsilon)&=\lambda-\sum_{i_k\in\mathcal{I}}\mu_{i_k}\left(\trace[\bW_{i_k}\bE_{i_k}({\bU}_{i_k},{\bV})]
-\log\det(\bW_{i_k})-d_{i_k}+\lambda\right)\nonumber\\
&\quad +\sum_{k\in\mathcal{K}}\epsilon_k (P_k-
\sum_{i_k\in\mathcal{I}_k} \trace[\bV_{i_k}\bV^H_{i_k}])\nonumber.
\end{align}
Let $(\widehat{\bV}, \widehat{\bU},
\widehat{\bW},\widehat{\lambda})$ be a KKT solution of problem (Q1),
and let $\{\widehat{\mu}_{i_k}\}_{i_k\in\mathcal{I}_k}$ and
$\{\widehat{\epsilon}_{k}\}_{k\in\mathcal{K}}$ be the set of
associated optimal Lagrangian multipliers. The KKT optimality
condition for problem (Q1) is as follows.{\small
\begin{align}
\nabla_{\bV_{m_\ell}}\widehat{L}(\widehat{\bV},\widehat{\bU},
\widehat{\bW}, \widehat{\lambda};
\widehat{\bfmu},\widehat{\bfepsilon})=-\sum_{i_k\in\mathcal{I}}\widehat{\mu}_{i_k}\nabla_{\bV_{m_\ell}}
\left(\trace[\widehat{\bW}_{i_k}\bE_{i_k}(\widehat{\bU}_{i_k},\widehat{\bV})]\right)-
2\widehat{\epsilon}_\ell\widehat{\bV}_{m_\ell}&=0, \forall~m_\ell\in\mathcal{I}\label{eqKKTVP2}\\
\nabla_{\bU_{i_k}}\widehat{L}(\widehat{\bV},\widehat{\bU},
\widehat{\bW}, \widehat{\lambda};
\widehat{\bfmu},\widehat{\bfepsilon})
=-\widehat{\mu}_{i_k}\nabla_{\bU_{i_k}}
\left(\trace[\widehat{\bW}_{i_k}\bE_{i_k}(\widehat{\bU}_{i_k},\widehat{\bV})]\right)&=0,
\forall~i_k\in\mathcal{I}\label{eqKKTUP2}\\
\nabla_{\bW_{i_k}}\widehat{L}(\widehat{\bV},\widehat{\bU},
\widehat{\bW}, \widehat{\lambda};
\widehat{\bfmu},\widehat{\bfepsilon})=-\widehat{\mu}_{i_k}\nabla_{{\bW}_{i_k}}\left(
\trace[\widehat{\bW}_{i_k}\bE_{i_k}(\widehat{\bU}_{i_k},\widehat{\bV})]-\log\det(\widehat{\bW}_{i_k})\right)&=0,
~\forall~i_k\in\mathcal{I}\label{eqKKTWP2}\\
\sum_{i_k\in\mathcal{I}}\widehat{\mu}_{i_k}&=1\label{eqKKTLambdaP2}\\
0\le\widehat{\mu}_{i_k}\perp
-\mbox{Tr}[\widehat{\bW}_{i_k}\bE(\widehat{\bU}_{i_k},\widehat{\bV})_{i_k}]+
\log\det(\widehat{\mathbf{W}}_{i_k})+d_{i_k}-\widehat{\lambda}&\ge 0, \forall~i_k\in\mathcal{I}\label{eqKKTComplimentarityP21}\\
0\le\widehat{\epsilon}_{k}\perp
P_k-\sum_{i_k\in\mathcal{I}_k}\trace[\widehat{\bV}_{i_k}\widehat{\bV}^H_{i_k}]&\ge
0, \forall~k\in\mathcal{K}\label{eqKKTComplimentarityP22}
\end{align}}
\hspace{-0.1cm}The claim is that if $\bV^*$ satisfies the KKT system
\eqref{eqKKTVP1}--\eqref{eqKKTComplimentarityP12}, then the set of
solutions $(\widehat{\bV},\widehat{\bU},
\widehat{\bW},\widehat{\lambda})=(\bV^*,\Psi(\bV^*),
\Upsilon(\bV^*),\lambda^*)$, $(\widehat{\mu}_{i_k},
\widehat{\epsilon}_{k})=({\mu}^*_{i_k}, {\epsilon}^*_{k})$ must
satisfy the KKT system
\eqref{eqKKTVP2}--\eqref{eqKKTComplimentarityP22}. The proof for
this claim consists of three steps.

{\bf Step 1}: It is easy to observe that condition
\eqref{eqKKTLambdaP2} and \eqref{eqKKTComplimentarityP22} are
satisfied. We can then verify that by letting
$\widehat{\bU}=\Psi(\widehat{\bV})$ and
$\widehat{\bW}=\Upsilon(\widehat{\bV})$, we have that{\small
\begin{align}
&\nabla_{\bU_{i_k}}
\left(\trace[\widehat{\bW}_{i_k}\bE_{i_k}(\widehat{\bU}_{i_k},\widehat{\bV})]\right)=0,\quad
\nabla_{\bW_{i_k}}
\left(\trace[\widehat{\bW}_{i_k}\bE_{i_k}(\widehat{\bU}_{i_k},\widehat{\bV})]-\log\det(\widehat{\bW}_{i_k})\right)=0.\nonumber
\end{align}}
\hspace{-0.1cm}Consequently, conditions
\eqref{eqKKTUP2}--\eqref{eqKKTWP2} are satisfied.

{\bf Step 2}: We then show that condition \eqref{eqKKTVP2} is
satisfied.

For a set of given multipliers
$\{\mu^*_{i_k}\}_{i_k\in\mathcal{I}}$, define the following two
index sets
\begin{align}
\bar{\mathcal{A}}\triangleq\{i_k|\mu^*_{i_k}=0\};\quad
{\mathcal{A}}\triangleq\{i_k|\mu^*_{i_k}>0\}.\nonumber
\end{align}
In words, the set $\mathcal{A}$ includes the users for which the
rate constraints in (P1) are active. Notice that due to the
constraint \eqref{eqKKTLambdaP1}, set set $\mathcal{A}$ must be
nonempty, i.e., $|\mathcal{A}|>0$.

According to the above defined index sets, we partition all the
users' receive beamformers into two parts $
\bU_{\mathcal{A}}\triangleq\{\bU_{i_k}\}_{i_k\in\mathcal{A}};\quad
\bU_{\bar{\mathcal{A}}}\triangleq\{\bU_{i_k}\}_{i_k\in\bar{\mathcal{A}}}$.
Define the sets $\bW_{\mathcal{A}}$, $\bW_{\bar{\mathcal{A}}}$,
$\Psi_{\mathcal{A}}(\cdot)$, $\Psi_{\bar{\mathcal{A}}}(\cdot)$,
$\Upsilon_{\mathcal{A}}(\cdot)$,
$\Upsilon_{\bar{\mathcal{A}}}(\cdot)$ and $\bfmu_{\mathcal{A}}$ and
$\bfmu_{\bar{\mathcal{A}}}$ similarly. Define the reduced Lagrangian
function as{\small
\begin{align}
L_{{\mathcal{A}}}(\bV, \lambda;
\bfmu,\bfepsilon)&=\lambda+\sum_{i_k\in{\mathcal{A}}}\mu_{i_k}\left(R_{i_k}(\bV)-\lambda\right)+\sum_{k\in\mathcal{K}}\epsilon_k
(P_k-\sum_{i_k\in\mathcal{I}_k}\trace[\bV_{i_k}\bV^H_{i_k}])\nonumber\\
\widehat{L}_{\mathcal{A}}(\bV, \bU,\bW, \lambda; \bfmu,
\bfepsilon)&=\lambda-\sum_{i_k\in\mathcal{A}}\mu_{i_k}\left(\trace[\bW_{i_k}\bE_{i_k}({\bU}_{i_k},{\bV})]
-\log\det(\bW_{i_k})-d_{i_k}+\lambda\right)\nonumber\\
&\quad -\sum_{k\in\mathcal{K}}\epsilon_k
(\sum_{i_k\in\mathcal{I}_k}\trace[\bV_{i_k}\bV^H_{i_k}]-P_k)\nonumber
\end{align}}
\hspace{-0.1cm}A key observation is that
$\widehat{L}_{\mathcal{A}}(\bV, \bU,\bW, \lambda; \bfmu,
\bfepsilon)$ is only a function of $\bU_{\mathcal{A}}$ and
$\bW_{\mathcal{A}}$, but not of $\bU_{\bar{\mathcal{A}}}$ and
$\bW_{\bar{\mathcal{A}}}$. Consequently, we can express it as
$\widehat{L}_{\mathcal{A}}(\bV, \bU_{\mathcal{A}},\bW_{\mathcal{A}},
\lambda; \bfmu, \bfepsilon)$

{\bf Step 2.1}: We show the following key identity. If
$\bU=\Psi(\bV)$ and $\bW=\Upsilon(\bV)$, then we have
\begin{align}
\nabla_{\bV}L_{\mathcal{A}}(\bV,\lambda^*;{\bfmu}^*,{\bfepsilon}^*)=
\nabla_{\bV} \widehat{L}_{\mathcal{A}}({\bV},\bU_{\mathcal{A}},
\bW_{\mathcal{A}},\lambda^*;{\bfmu}^*,{\bfepsilon}^*)\nonumber.
\end{align}

Notice the fact that
${\bU}_{i_k}=\Psi_{i_k}({\bV}),i_k\in\mathcal{A}$ and
${\bW}_{i_k}=\Upsilon({\bV}),i_k\in\mathcal{A}$ are the {\it unique}
solutions to the following two problems, respectively
\begin{align}
&\max_{\bU_{\mathcal{A}}}
\widehat{L}_{\mathcal{A}}({\bV},{\bU}_{\mathcal{A}},
{\bW}_{\mathcal{A}},\lambda;{\bfmu}^*,{\bfepsilon}^*)\nonumber\\
&\max_{\bW_{\mathcal{A}}}
\widehat{L}_{\mathcal{A}}({\bV},{\bU}_{\mathcal{A}},
{\bW}_{\mathcal{A}},\lambda;{\bfmu}^*,{\bfepsilon}^*)\nonumber.
\end{align}
This claim can be easily checked using the first order optimality
conditions of the respective problems. We note here that the
uniqueness of the solutions comes from the fact that for all
$i_k\in\mathcal{A}$, $\mu^*_{i_k}>0$, and the fact that $\bW_{i_k}$
and $\bU_{i_k}$ are the unique solutions to the following problems,
respectively.
\begin{align}
&\arg\max_{\bU_{i_k}}
-\left(\trace[{\bW}_{i_k}\bE_{i_k}({\bU}_{i_k},{\bV})]\right)\nonumber\\
&\arg\max_{\bW_{i_k}}
-\left(\trace[{\bW}_{i_k}\bE_{i_k}({\bU}_{i_k},{\bV})]-\log\det({\bW}_{i_k})\right).\nonumber
\end{align}

Moreover, plugging ${\bU}_{\mathcal{A}}=\Psi_{\mathcal{A}}({\bV})$
and ${\bW}_{\mathcal{A}}=\Upsilon_{\mathcal{A}}({\bV})$ into
$\widehat{L}_{\mathcal{A}}({\bV},{\bU}_{\mathcal{A}},
{\bW}_{\mathcal{A}},\lambda;{\bfmu}^*,{\bfepsilon}^*)$, we
obtain{\small
\begin{align}
&\widehat{L}_{\mathcal{A}}({\bV},\Psi_{\mathcal{A}}({\bV}),
\Upsilon_{\mathcal{A}}({\bV}),\lambda;{\bfmu}^*,{\bfepsilon}^*)\nonumber\\
&=\lambda-\sum_{i_k\in\mathcal{A}}\mu^*_{i_k}\left(\trace[\bW_{i_k}\bE^{\rm
mmse}_{i_k}({\bV})] -\log\det(\bW_{i_k})-d_{i_k}+\lambda\right)
+\sum_{k\in\mathcal{K}}\epsilon_k
(P_k-\sum_{i_k\in\mathcal{I}_k}\trace[\bV_{i_k}\bV^H_{i_k}])\nonumber\\
&=\lambda-\sum_{i_k\in\mathcal{A}}\mu^*_{i_k}\left(d_{i_k}
-\log\det((\bE^{\rm mmse}_{i_k}(\bV))^{-1})-d_{i_k}+\lambda\right)
+\sum_{k\in\mathcal{K}}\epsilon_k
(P_k-\sum_{i_k\in\mathcal{I}_k}\trace[\bV_{i_k}\bV^H_{i_k}])\nonumber\\
&=\lambda+\sum_{i_k\in\mathcal{A}}\mu^*_{i_k}\left(
\log\det((\bE^{\rm mmse}_{i_k}(\bV))^{-1})-\lambda\right)
+\sum_{k\in\mathcal{K}}\epsilon_k
(P_k-\sum_{i_k\in\mathcal{I}_k}\trace[\bV_{i_k}\bV^H_{i_k}])\nonumber\\
&=\lambda+\sum_{i_k\in\mathcal{A}}\mu^*_{i_k}\left(
R_{i_k}(\bV)-\lambda\right) +\sum_{k\in\mathcal{K}}\epsilon_k
(P_k-\sum_{i_k\in\mathcal{I}_k}\trace[\bV_{i_k}(\bV_{i_k})^H])\nonumber
\end{align}}
\hspace{-0.2cm}where in the last equality we have used a well known
relationship between the MSE matrix and the achievable rate:
$R_{i_k}(\bV)=-\log\det(\bE^{\rm mmse}_{i_k}(\bV))$. Consequently,
we obtain
\begin{align}
{\bU}_{\mathcal{A}}=\Psi_{\mathcal{A}}({\bV}),~
{\bW}_{\mathcal{A}}=\Upsilon_{\mathcal{A}}({\bV})\Longrightarrow\widehat{L}_{\mathcal{A}}({\bV},\bU_{\mathcal{A}},\bW_{\mathcal{A}},\lambda;{\bfmu}^*,{\bfepsilon}^*)=L_{\mathcal{A}}(\bV,\lambda;{\bfmu}^*,{\bfepsilon}^*)\nonumber.
\end{align}
Combine the above result and the uniqueness of
${\bU}_{i_k}=\Psi_{i_k}({\bV}),i_k\in\mathcal{A}$ and
${\bW}_{i_k}=\Upsilon({\bV}),i_k\in\mathcal{A}$, we can apply
Dankin's Min-Max Theorem (See \cite[Proposition B 2.5]{bertsekas99})
to obtain the desired equality
\begin{align}
\nabla_{\bV}L_{\mathcal{A}}(\bV,\lambda^*;{\bfmu}^*,{\bfepsilon}^*)=
\nabla_{\bV} \widehat{L}_{\mathcal{A}}({\bV},\bU_{\mathcal{A}},
\bW_{\mathcal{A}},\lambda^*;{\bfmu}^*,{\bfepsilon}^*)\label{eqDankins}.
\end{align}
This concludes our proof of Step 2.1.

{\bf Step 2.2}: We show that condition \eqref{eqKKTVP1} implies
condition \eqref{eqKKTVP2}.

Notice the fact that for all $i_k\in\bar{\mathcal{A}}$,
$\mu^*_{i_k}=0$, and the fact that
$\nabla_{\bV_{i_k}}R_{i_k}(\bV^*)$ takes finite value for all
$\bV^*\in\mathcal{V}$. Then condition \eqref{eqKKTVP1} is equivalent
to the following condition
\begin{align}
\nabla_{\bV_{\ell_j}}L_{\mathcal{A}}(\bV^*,\lambda;{\bfmu}^*,{\bfepsilon}^*)=0,\forall~\ell_j\in\mathcal{I}\label{eqReducedLZero}.
\end{align}

We have the following series of equalities{\small
\begin{align}
&\nabla_{\bV_{m_\ell}} \widehat{L}({\bV}^*,\bU^*,
\bW^*,\lambda^*;{\bfmu}^*,{\bfepsilon}^*)\nonumber\\
&= -\sum_{i_k\in\mathcal{I}}{\mu}^*_{i_k}\nabla_{\bV_{m_\ell}}
\left(\trace[\bW_{i^*_k}\bE_{i_k}(\bU^*_{i_k},{\bV}^*)]\right)-
2{\epsilon}^*_\ell{\bV}^*_{m_\ell}\nonumber\\
&\stackrel{(a)}=
-\sum_{i_k\in\mathcal{A}}{\mu}^*_{i_k}\nabla_{\bV_{m_\ell}}
\left(\trace[\bW^*_{i_k}\bE_{i_k}(\bU^*_{i_k},{\bV}^*)]\right)-
2{\epsilon}^*_\ell{\bV}^*_{m_\ell}\nonumber\\
&=\nabla_{\bV_{m_\ell}} \widehat{L}_{\mathcal{A}}({\bV}^*,\bU^*,
\bW^*,\lambda^*;{\bfmu}^*,{\bfepsilon}^*)\nonumber\\
&=0\nonumber
\end{align}}
\hspace{-0.1cm}where in $(a)$ we have again used the fact that
$\mu^*_{i_k}=0$ for all $i_k\in\bar{\mathcal{A}}$,
$\mathcal{A}\bigcup\bar{\mathcal{A}}=\mathcal{I}$, and the fact that
$\nabla_{\bV_{m_\ell}}
\left(\trace[\bW_{i_k}^*\bE_{i_k}(\bU_{i_k}^*,{\bV}^*)]\right)$
takes finite value for all $\bU^*=\Upsilon_{i_k}(\bV^*)$,
$\bW^*=\Psi_{i_k}(\bV^*)$, and all $\bV^*\in\mathcal{V}$; the last
equality is due to \eqref{eqDankins} and \eqref{eqReducedLZero}.
This shows that \eqref{eqKKTVP2} is true.

{\bf Step 3}: In this step, we show that condition
\eqref{eqKKTComplimentarityP11} implies
\eqref{eqKKTComplimentarityP21}.

Let ${\bU}_{i_k}=\Psi_{i_k}({\bV})$ and
${\bW}_{i_k}=\Upsilon_{i_k}({\bV})$, we have that{\small
\begin{align}
&-\mbox{Tr}[{\bW}_{i_k}\bE_{i_k}({\bU}_{i_k},{\bV})_{i_k}]+
\log\det({\mathbf{W}}_{i_k})+d_{i_k}-{\lambda}\nonumber\\
&=-\mbox{Tr}[{\bW}_{i_k}\bE_{i_k}^{\rm mmse}({\bV})]+
\log\det({\mathbf{W}}_{i_k})+d_{i_k}-{\lambda}\nonumber\\
&=-d_{i_k}+\log\det\left(({\mathbf{E}}^{\rm
mmse}_{i_k}(\bV))^{-1}\right)+d_{i_k}-\lambda\nonumber\\
&=R_{i_k}(\bV)-\lambda\label{eqStep3}
\end{align}}
\hspace{-0.1cm}In \eqref{eqKKTComplimentarityP11} we have that
$0\le\mu^*_{i_k}\perp R_{i_k}(\bV^*)-\lambda^*\ge 0$. This condition
combined with \eqref{eqStep3} ensures
\eqref{eqKKTComplimentarityP21} is true for
$(\widehat{\bV},\widehat{\bU},
\widehat{\bW},\widehat{\lambda})=(\bV^*,\Psi(\bV^*),
\Upsilon(\bV^*),\lambda^*)$.

In conclusion, we have shown that if $(\bV^*,\lambda^*)$ and
$(\bfmu^*,\bfepsilon^*)$ satisfy the KKT system
\eqref{eqKKTVP1}--\eqref{eqKKTComplimentarityP12}, then
$(\widehat{\bV},\widehat{\bU},
\widehat{\bW},\widehat{\lambda})=(\bV^*,\Psi(\bV^*),
\Upsilon(\bV^*),\lambda^*)$, $(\widehat{\mu}_{i_k},
\widehat{\epsilon}_{k})=({\mu}^*_{i_k}, {\epsilon}^*_{k})$ satisfy
the KKT system \eqref{eqKKTVP2}--\eqref{eqKKTComplimentarityP22}.\\
%

We now establish the correspondence between the global optimal solutions of the two problems. 
The proof has two main steps.

{\bf Step 1}: We first argue that for every KKT solution $(\bV^*,
\widetilde{\bU}, \widetilde{\bW})$ of problem (Q1), there is a
corresponding solution $(\bV^*, \bU^*, \bW^*)=(\bV^*, \Psi(\bV^*),
\Upsilon(\bV^*))$ that is also a KKT solution. Furthermore, it
achieves the same objective value as $(\bV^*, \widetilde{\bU},
\widetilde{\bW})$.

Again consider the equivalent reformulation (Q1). Let $(\bfmu^*,
\bfepsilon^*)$ denote a set of optimal multiplier corresponds to
solution $(\bV^*, \widetilde{\bU}, \widetilde{\bW}, \lambda^*)$,
that is, together they satisfy the KKT system
\eqref{eqKKTVP2}--\eqref{eqKKTComplimentarityP22}. Define the index
sets $\mathcal{A}$ and $\bar{\mathcal{A}}$ as $
\bar{\mathcal{A}}\triangleq\{i_k|\mu^*_{i_k}=0\};\quad
{\mathcal{A}}\triangleq\{i_k|\mu^*_{i_k}>0\}\nonumber$. We will show
that the solution $(\bV^*,\bU^*,\bW^*)=(\bV^*, \Psi(\bV^*),
\Upsilon(\bV^*))$, along with the optimal slack variable $\lambda^*$
and the multipliers $(\bfmu^*, \bfepsilon^*)$ must also satisfy the
KKT system \eqref{eqKKTVP2}--\eqref{eqKKTComplimentarityP22}.

Firstly it is easy to see that the conditions
\eqref{eqKKTComplimentarityP22} and \eqref{eqKKTLambdaP2} are
satisfied.

We then show that the conditions \eqref{eqKKTUP2}--\eqref{eqKKTWP2}
are satisfied. Observe that for $i_k\in\mathcal{A}$, conditions
\eqref{eqKKTUP2}--\eqref{eqKKTWP2} imply that{\small
\begin{align}
\nabla_{\bU_{i_k}}
\left(\trace[\widetilde{\bW}_{i_k}\bE_{i_k}(\widetilde{\bU}_{i_k},{\bV}^*)]\right)&=0\nonumber\\
\nabla_{{\bW}_{i_k}}\left(
\trace[\widetilde{\bW}_{i_k}\bE_{i_k}(\widetilde{\bU}_{i_k},{\bV}^*)]-\log\det(\widetilde{\bW}_{i_k})\right)&=0\nonumber
\end{align}}
\hspace{-0.1cm}which in turn imply that the solutions for the above
problems are uniquely given as
\begin{align}
\widetilde{\bU}_{i_k}=\Psi_{i_k}(\bV^*),
\forall~i_k\in\mathcal{A},\quad
\widetilde{\bW}_{i_k}=\Upsilon_{i_k}(\bV^*),
\forall~i_k\in\mathcal{A}.\label{eqPartialEquivalence}
\end{align}
Thus, setting
$\widetilde{\bU}_{\mathcal{A}}=\Psi_{\mathcal{A}}(\bV^*)$ and
$\widetilde{\bW}_{\mathcal{A}}=\Upsilon_{\mathcal{A}}(\bV^*)$
ensures condition \eqref{eqKKTUP2}--\eqref{eqKKTWP2} for all
$i_k\in\mathcal{A}$. Alternatively, for $i_k\in\bar{\mathcal{A}}$,
due to the fact that $\mu^*_{i_k}=0$, the conditions
\eqref{eqKKTUP2}--\eqref{eqKKTWP2} is also satisfied.

We then show that condition \eqref{eqKKTVP2} is satisfied for
solution $(\bV^*,\bU^*, \bW^*,\lambda^*)=(\bV^*, \Psi(\bV^*),
\Upsilon(\bV^*),\lambda^*)$. This is shown by utilizing the
following series of equalities similarly as in the proof of
Proposition \ref{Prop:P1andQ1}:{\small
\begin{align}
&\nabla_{\bV_{m_\ell}} \widehat{L}({\bV}^*,\bU^*,
\bW^*,\lambda^*;{\bfmu}^*,{\bfepsilon}^*)\nonumber\\
&= -\sum_{i_k\in\mathcal{I}}{\mu}^*_{i_k}\nabla_{\bV_{m_\ell}}
\left(\trace[\bW_{i^*_k}\bE_{i_k}(\bU^*_{i_k},{\bV}^*)]\right)-
2{\epsilon}^*_\ell{\bV}^*_{m_\ell}\nonumber\\
&\stackrel{(a)}=
-\sum_{i_k\in\mathcal{A}}{\mu}^*_{i_k}\nabla_{\bV_{m_\ell}}
\left(\trace[\bW^*_{i_k}\bE_{i_k}(\bU^*_{i_k},{\bV}^*)]\right)-
2{\epsilon}^*_\ell{\bV}^*_{m_\ell}\nonumber\\
&\stackrel{(b)}=
-\sum_{i_k\in\mathcal{A}}{\mu}^*_{i_k}\nabla_{\bV_{m_\ell}}
\left(\trace[\widetilde{\bW}_{i_k}\bE_{i_k}(\widetilde{\bU}_{i_k},{\bV}^*)]\right)-
2{\epsilon}^*_\ell{\bV}^*_{m_\ell}\nonumber\\
&\stackrel{(c)}=\nabla_{\bV_{m_\ell}}
\widehat{L}_{\mathcal{A}}({\bV}^*,\widetilde{\bU},
\widetilde{\bW},\lambda^*;{\bfmu}^*,{\bfepsilon}^*)=0\nonumber
\end{align}}
\hspace{-0.2cm}where in $(a)$ (resp. in $(c)$) we have again used
the fact that $\mu^*_{i_k}=0$ for all $i_k\in\bar{\mathcal{A}}$,
$\mathcal{A}\bigcup\bar{\mathcal{A}}=\mathcal{I}$, and the fact that
$\nabla_{\bV_{m_\ell}}
\left(\trace[\bW^*_{i_k}\bE_{i_k}(\bU^*_{i_k},{\bV}^*)]\right)$
takes finite value for all feasible $\bV^*, \bU^*, \bW^*$ that
satisfies the KKT system
\eqref{eqKKTVP2}--\eqref{eqKKTComplimentarityP22}; $(b)$ is due to
\eqref{eqPartialEquivalence}; the last equality is due to the
assumption that $(\bV^*, \widetilde{\bU}, \widetilde{\bW},
\lambda^*)$ together with $(\bfmu^*, \bfepsilon^*)$ satisfy
\eqref{eqKKTVP2}.

Next, it is straightforward to see that when
$(\bV^*,\bU^*,\bW^*)=(\bV^*, \Psi(\bV^*), \Upsilon(\bV^*))$, then
for all $i_k\in\mathcal{I}$,{\small
\begin{align}
&-\mbox{Tr}[\bW^*_{i_k}\bE_{i_k}(\bU^*_{i_k},{\bV}^*)]+
\log\det(\bW^*_{i_k})+d_{i_k}-{\lambda^*}\nonumber\\
&\ge
-\mbox{Tr}[\widetilde{\bW}_{i_k}\bE_{i_k}(\widetilde{\bU}_{i_k},{\bV}^*)]+
\log\det(\widetilde{\mathbf{W}}_{i_k})+d_{i_k}-{\lambda}^*\label{eqMSEConstraintCompare}.
\end{align}}
\hspace{-0.1cm}This result implies that the feasibility part of
\eqref{eqKKTComplimentarityP21} is satisfied. In order to show that
the complementarity part of \eqref{eqKKTComplimentarityP21} is also
satisfied, it is sufficient to show that for all $i_k\in\mathcal{A}$
(i.e., for all $i_k$ such that $\mu^*_{i_k}>0$),
\eqref{eqMSEConstraintCompare} achieves strict equality. This is
guaranteed by \eqref{eqPartialEquivalence}.

So far we have shown that  $(\bV^*,\bU^*,\bW^*,\lambda^*)=(\bV^*,
\Psi(\bV^*), \Upsilon(\bV^*),\lambda^*)$ along with  $(\bfmu^*,
\bfepsilon^*)$ satisfy the KKT system
\eqref{eqKKTVP2}--\eqref{eqKKTComplimentarityP22}. The last step we
need to show is this solution achieves the same objective value as
$(\bV^*, \widetilde{\bU}, \widetilde{\bW})$.

For this purpose, observe that due to the fact that
$\sum_{i_k\in\mathcal{I}_k}\mu_{i_k}^*=1$, we must have
$|\mathcal{A}|>0$. Due to complementarity condition
\eqref{eqKKTComplimentarityP12}, at least one of the constraints
\begin{align}
-\mbox{Tr}[\Upsilon_{i_k}(\bV^*)\bE_{i_k}(\Psi_{i_k}(\bV^*),{\bV}^*)]+
\log\det(\Upsilon_{i_k}(\bV^*))+d_{i_k}-{\lambda^*}\ge 0\nonumber
\end{align}
is active. This implies that
$\min_{i_k\in\mathcal{I}}-\left(\mbox{Tr}[\Upsilon_{i_k}(\bV^*)\mathbf{E}_{i_k}(
\Psi_{i_k}(\bV^*),\bV^*)]
-\log\det(\mathbf{W}^*_{i_k})-d_{i_k}\right)=\lambda^*$. Similarly,
we must also have
$\min_{i_k\in\mathcal{I}}-\left(\mbox{Tr}[\widetilde{\mathbf{W}}_{i_k}\mathbf{E}_{i_k}(
\widetilde{\bU}_{i_k},\bV^*)]
-\log\det(\widetilde{\mathbf{W}}_{i_k})-d_{i_k}\right)=\lambda^*$.

We conclude that for every KKT solution $(\bV^*, \widetilde{\bU},
\widetilde{\bW})$ of problem (Q1), $(\bV^*, \bU^*, \bW^*)=(\bV^*,
\Psi(\bV^*), \Upsilon(\bV^*))$ is also a KKT solution, and it
achieves the same objective value as $(\bV^*, \widetilde{\bU},
\widetilde{\bW})$.

{\bf Step 2}: Now we are ready to argue the equivalence of problem
(P1) and (Q1). Let us use $f({\bV})$ and $\bar{f}(\bV, \bU, \bW)$ to
denote the objective value of problem (P1) and (Q1), respectively.

Firstly, we can check by simple substitution that for any given
$\bV\in\mathcal{V}$, $f(\bV)=\bar{f}(\bV,\Psi(\bV), \Upsilon(\bV))$.

Suppose $\bV^*,\bU^*,\bW^*$ is a global optimal solution of problem
(Q1), but $\bV^*$ is not a global optimal solution of (P1). Then
there must exist a solution $\widetilde{\bV}$ such that
$f(\widetilde{\bV})> f(\bV^*)$. From the first part of the proof we
have that $\bV^*, \Psi(\bV^*), \Upsilon(\bV^*)$ is also a KKT
solution and it achieves the same objective value as
$\bV^*,\bU^*,\bW^*$. Consequently $\bV^*, \Psi(\bV^*),
\Upsilon(\bV^*)$ is also a global optimal solution for (Q1). Using
the fact that $f(\bV)=\bar{f}(\bV,\Psi(\bV), \Upsilon(\bV))$, we
conclude that
\begin{align}
\bar{f}({\bV}^*,\Psi({\bV}^*), \Upsilon({\bV}^*))&=f({\bV}^*)<
f(\widetilde{\bV})=\bar{f}(\widetilde{\bV},\Psi(\widetilde{\bV}),
\Upsilon(\widetilde{\bV}))
\end{align}

However this contradicts the global optimality of the solution
$\bV^*, \Psi(\bV^*), \Upsilon(\bV^*)$ for problem (Q1). The reverse
direction can be argued similarly.
\end{proof}

\bibliographystyle{IEEEtranS}
\bibliography{IEEEabrv,ref}

\newpage

\begin{table}[h]
\centering \caption{Pseudo code of the proposed
algorithm}\label{FIG:Algo}
\begin{tabular}{|p{3.1in}|}
\hline
\begin{itemize}
\item [1] \; Set $n=0$. Initialize $\bV^0, \bU^0$, and $\bW^0$ randomly such that the power budget constraints are satisfied.
\item [2] \; \textbf{repeat}
\item [3] $\quad \mathbf{V}^{n+1}\in\Phi(\mathbf{U}^{n},\mathbf{W}^{n})$
\item [4] $\quad \mathbf{U}^{n+1}=\Psi(\mathbf{V}^{n+1})$
\item [5] $\quad \mathbf{W}^{n+1}=\Upsilon(\mathbf{V}^{n+1})$
\item [6] $\quad n \leftarrow n+1 $
\item [7] \; \textbf{until} some convergence criterion is met
\end{itemize}
\\
\hline
\end{tabular}
\end{table}

\begin{figure}[ht!]
 \centering
\includegraphics[width=4in]{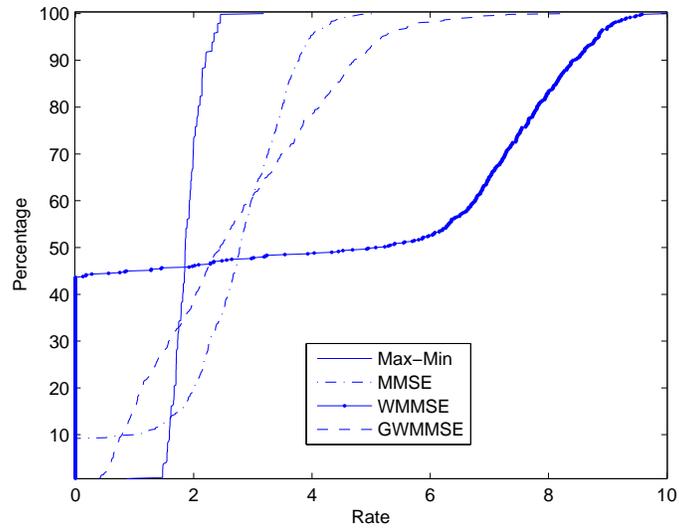}
\caption{Rate CDF: $K=4, I = 3, M=6, N=2, d=1$}\label{FIG:K=4CDF}
\end{figure}
\begin{figure}[ht!]
 \centering
\includegraphics[width=4in]{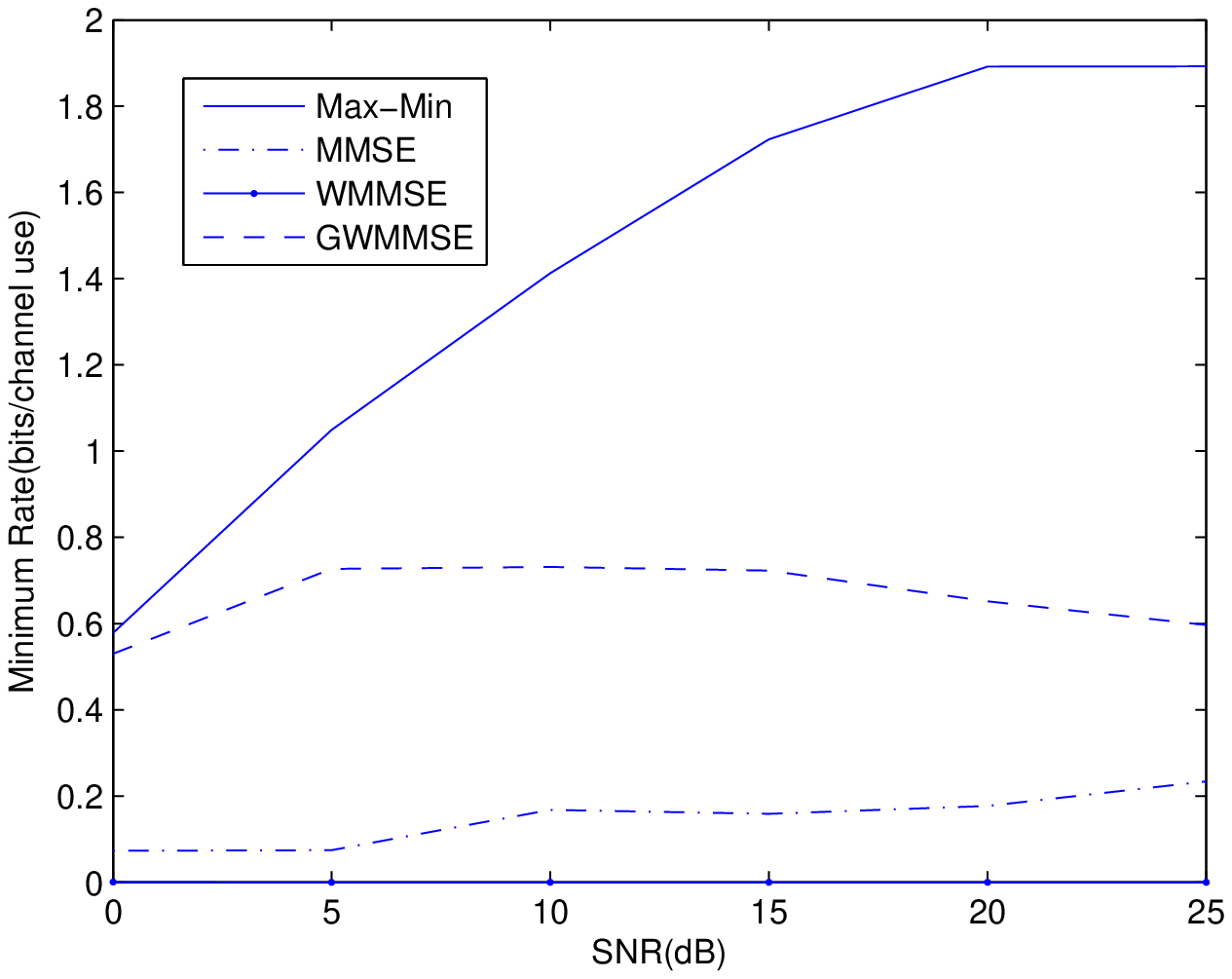}
\caption{Minimum rate in the system versus SNR: $K=4, I = 3, M=6,
N=2, d=1$}\label{FIG:K=4MinRate}
\end{figure}

\begin{figure}[ht!]
 \centering
\includegraphics[width=4in]{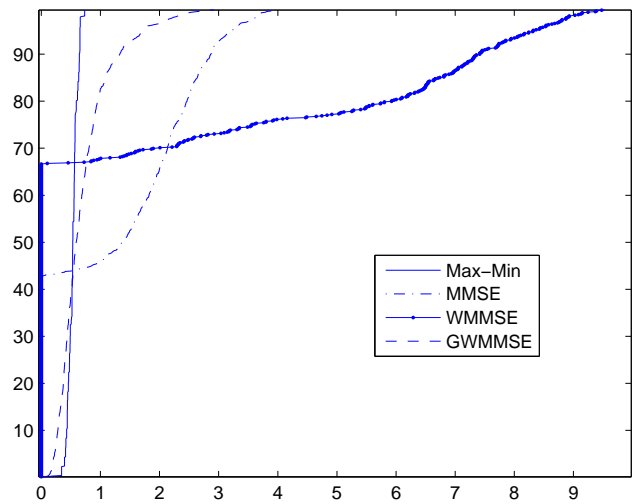}
\caption{Rate CDF: $K=5, I = 3, M=3, N=2, d=1$}\label{FIG:K=5CDF}
\end{figure}
\begin{figure}[ht!]
 \centering
\includegraphics[width=4in]{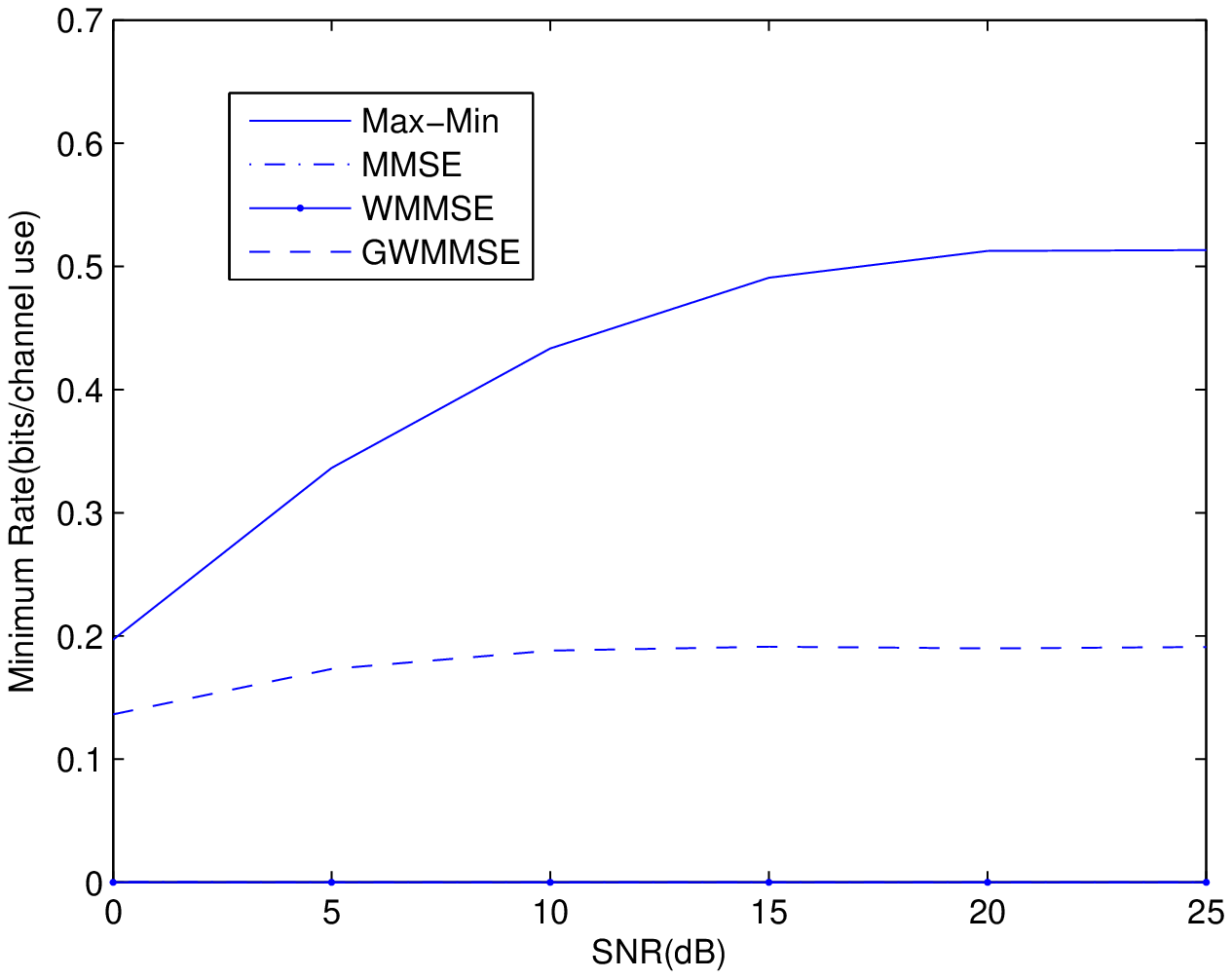}
\caption{Minimum rate in the system versus SNR: $K=5, I = 3, M=3,
N=2, d=1$}\label{FIG:K=5MinRate}
\end{figure}

\begin{figure}[ht!]
 \centering
\includegraphics[width=4in]{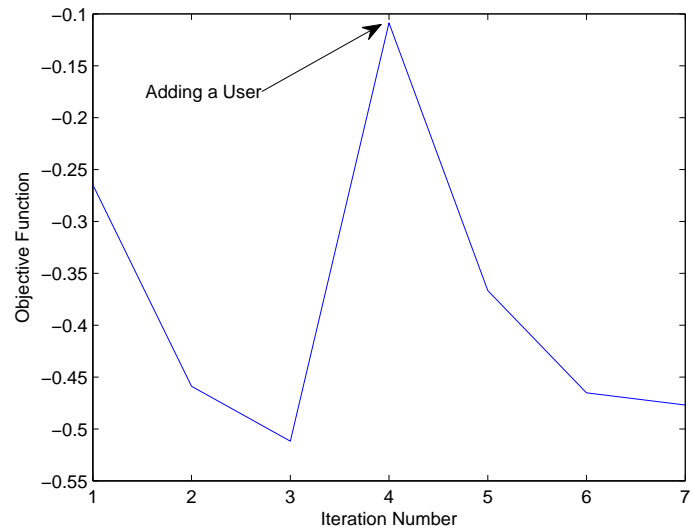}
\caption{WMMSE objective function while adding a User: $K=5, M=3, N=2, d=1$}\label{FIG:Addonef}
\end{figure}
\begin{figure}[ht!]
 \centering
\includegraphics[width=4in]{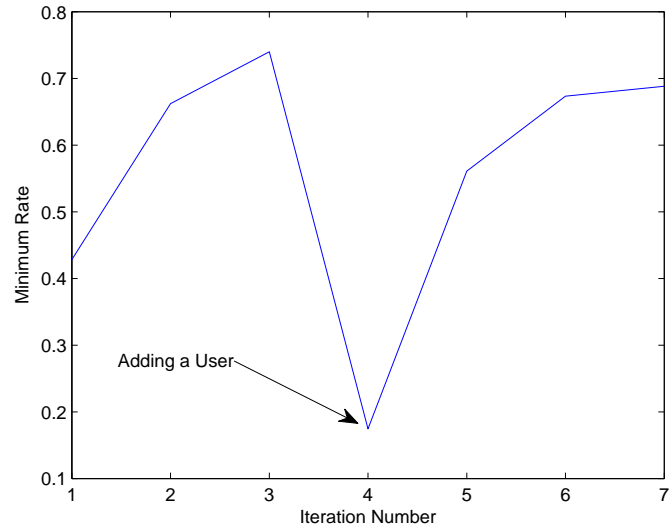}
\caption{Minimum rate while adding a User $K=5, I = 3, M=3,
N=2, d=1$}\label{FIG:AddoneR}
\end{figure}

\begin{figure}[ht!]
 \centering
\includegraphics[width=4in]{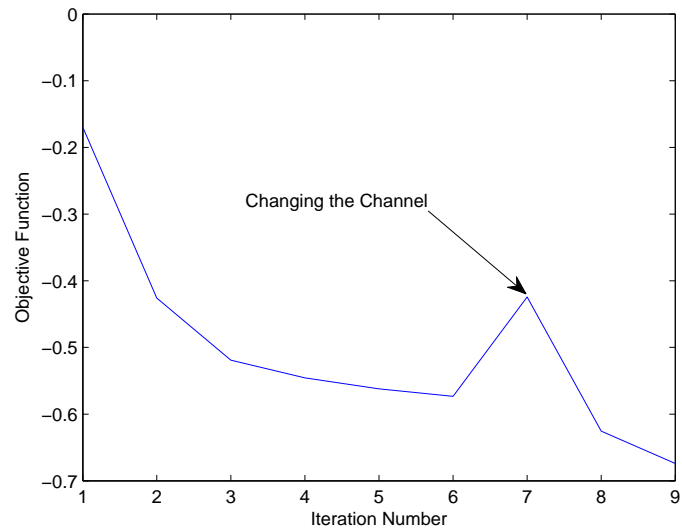}
\caption{WMMSE objective function while changing the channel: $K=5, I = 2, M=3, N=2, d=1$}\label{FIG:ChangeChannelf}
\end{figure}
\begin{figure}[ht!]
 \centering
\includegraphics[width=4in]{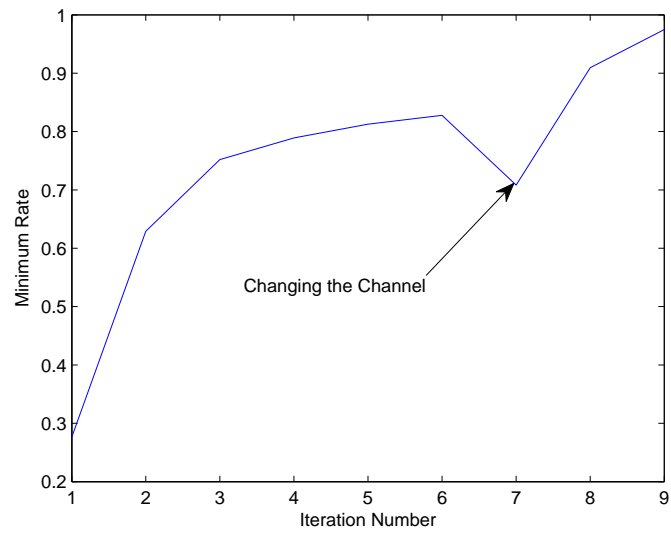}
\caption{Minimum rate while changing the channel: $K=5, I = 2, M=3,
N=2, d=1$}\label{FIG:ChangeChannelR}
\end{figure}
\end{document}